\documentclass[11pt,letterpaper, onecolumn]{quantumarticle}
\synctex=1
\pdfoutput=1
\usepackage{color,fullpage,url,booktabs}
\usepackage[numbers,sort&compress]{natbib}
\usepackage{amsmath, amsthm, amssymb}  
\usepackage{colorprofiles}
\usepackage[a-1b,mathxmp]{pdfx}[2018/12/22]
\hypersetup{pdfstartview=}

\pdfinfo{
  /Title (Rapid mixing of path integral Monte Carlo for 1D stoquastic Hamiltonians)
  /Author (Elizabeth Crosson and Aram W. Harrow)
  /Language (en-US)
  /Subject (Path integral quantum Monte Carlo (PIMC) is a method for estimating thermal equilibrium properties of stoquastic quantum spin systems by sampling from a classical Gibbs distribution using Markov chain Monte Carlo.  The PIMC method has been widely used to study the physics of materials and for simulated quantum annealing, but these successful applications are rarely accompanied by formal proofs that the Markov chains underlying PIMC rapidly converge to the desired equilibrium distribution.  

 In this work we analyze the mixing time of PIMC for 1D stoquastic Hamiltonians, including disordered transverse Ising models (TIM) with long-range algebraically decaying interactions as well as disordered XY spin chains with nearest-neighbor interactions.    By bounding the convergence time to the equilibrium distribution we rigorously justify the use of PIMC to approximate partition functions and expectations of observables for these models at inverse temperatures that scale at most logarithmically with the number of qubits. 

 The mixing time analysis is based on the canonical paths method applied to the single-site Metropolis Markov chain for the Gibbs distribution of 2D classical spin models with couplings related to the interactions in the quantum Hamiltonian. Since the system has strongly nonisotropic couplings that grow with system size, it does not fall into the known cases where 2D classical spin models are known to mix .)
}

\makeatletter
\AtBeginDocument{\let\mathaccentV\AMS@mathaccentV}
\makeatother

\urlstyle{sf}
\usepackage{listings}
\usepackage{graphicx}
\usepackage{tikz}
\usepackage{float}

\newcommand{\nc}{\newcommand}
\nc{\rnc}{\renewcommand}

\newcommand{\ket}[1]{\left|#1\right\rangle}

\newcommand{\vev}[1]{\left\langle #1\right\rangle}

\DeclareMathOperator{\diag}{diag}

\DeclareMathOperator{\poly}{poly}

\DeclareMathOperator{\tr}{tr}

\def\be#1\ee{\begin{equation}#1\end{equation}}
\def\bea#1\eea{\begin{eqnarray}#1\end{eqnarray}}
\def\beas#1\eeas{\begin{eqnarray*}#1\end{eqnarray*}}
\def\ba#1\ea{\begin{align}#1\end{align}}
\def\bas#1\eas{\begin{align*}#1\end{align*}}
\def\bpm#1\epm{\begin{pmatrix}#1\end{pmatrix}}

\def\eq#1{\eqref{eq:#1}}

\def\L{\left} 
\def\R{\right}
\def\ra{\rightarrow}

\nc{\grad}{{\vec{\nabla}}}

\newtheorem{thm}{Theorem}
\newtheorem*{thm*}{Theorem}

\newtheorem{prop}[thm]{Proposition}
\newtheorem{proto}{Protocol}

\theoremstyle{definition}

\newtheorem{dfn}[thm]{Definition}
\theoremstyle{plain}
\newtheorem{theo}[thm]{Theorem}

\newtheorem{lemma}[thm]{Lemma}

\makeatletter
\newtheorem*{rep@theorem}{\rep@title}
\newcommand{\newreptheorem}[2]{%
\newenvironment{rep#1}[1]{%
 \def\rep@title{#2 \ref{##1} (restatement)}%
 \begin{rep@theorem}}%
 {\end{rep@theorem}}}
\makeatother

\newreptheorem{thm}{Theorem}
\newreptheorem{lem}{Lemma}

\def\cB{\mathcal{B}}

\def\cH{\mathcal{H}}

\def\cO{{\cal O}}
\def\cP{\mathcal{P}}
\def\cQ{\mathcal{Q}}

\def\cZ{\mathcal{Z}}

\def\bv{\mathbf{v}}
\def\bx{\mathbf{x}}
\def\by{\mathbf{y}}
\def\bz{\mathbf{z}}
\def\bfeta{\boldsymbol{\eta}}

\DeclareMathOperator*{\bbE}{\mathbb{E}}
\DeclareMathOperator*{\bbP}{\mathbb{P}}

\def\benum{\begin{enumerate}}
\def\eenum{\end{enumerate}}
\def\bit{\begin{itemize}}
\def\eit{\end{itemize}}
\newcommand{\fig}[1]{Fig.~\ref{fig:#1}}
\newcommand{\secref}[1]{Section~\ref{sec:#1}}

\newcommand{\lemref}[1]{Lemma~\ref{lem:#1}}

\nc{\todo}[1]{\textcolor{red}{todo: #1}}

\def\begsub#1#2\endsub{\begin{subequations}\label{eq:#1}\begin{align}#2\end{align}\end{subequations}}
\nc\qand{\qquad\text{and}\qquad}
\nc\mnb[1]{\medskip\noindent{\bf #1}}

\nc{\pder}[2]{\frac{\partial {#1}}{\partial {#2}}}
\nc{\p}{\partial}

\begin{document}

\title{Rapid mixing of path integral Monte Carlo for 1D stoquastic Hamiltonians}
\author{Elizabeth Crosson}
\affiliation{Center for Quantum Information and Control, University of New Mexico.}
\email{crosson@unm.edu}
\author{Aram W. Harrow }
\affiliation{Center for Theoretical Physics, Massachusetts Institute of Technology.}
\email{aram@mit.edu}

\maketitle
\vspace{-20pt}
\begin{abstract}
Path integral quantum Monte Carlo (PIMC) is a method for estimating thermal equilibrium properties of stoquastic quantum spin systems by sampling from a classical Gibbs distribution using Markov chain Monte Carlo.  The PIMC method has been widely used to study the physics of materials and for simulated quantum annealing, but these successful applications are rarely accompanied by formal proofs that the Markov chains underlying PIMC rapidly converge to the desired equilibrium distribution.  

In this work we analyze the mixing time of PIMC for 1D stoquastic Hamiltonians, including disordered transverse Ising models (TIM) with long-range algebraically decaying interactions as well as disordered XY spin chains with nearest-neighbor interactions.    By bounding the convergence time to the equilibrium distribution we rigorously justify the use of PIMC to approximate partition functions and expectations of observables for these models at inverse temperatures that scale at most logarithmically with the number of qubits. 

The mixing time analysis is based on the canonical paths method applied to the single-site Metropolis Markov chain for the Gibbs distribution of 2D classical spin models with couplings related to the interactions in the quantum Hamiltonian. Since the system has strongly nonisotropic couplings that grow with system size, it does not fall into the known cases where 2D classical spin models are known to mix rapidly.
\end{abstract}

\section{Introduction}

The application of Markov chain Monte Carlo methods to thermal states of a limited class of quantum systems was first proposed by Suzuki, Miyashita and Kuroda 40 years ago \cite{suzuki-1977}.   Their method applied to Hamiltonians with real and nonpositive off-diagonal matrix elements in a particular basis, a property that was called ``being free of the sign problem''  for many years until the term ``stoquastic'' was eventually adopted \cite{bravyi-2006} to emphasize a connection with nonnegative matrices.  Since then there have been a growing number of works on rigorous classical simulations for various stoquastic systems~\cite{bravyi-2008,bravyi-2014,crosson2016simulated,bravyi2016polynomial}.    Many quantum systems of physical and computational interest are stoquastic, including spinless particles moving on arbitrary graphs with position dependent interactions, as well as generalized transverse Ising models, which are particularly notable for their use in quantum annealing~\cite{kadowaki1998quantum, farhi-2000,albash2016adiabatic}.  

\paragraph{Statement of results.}A general $n$-qubit 1D stoquastic Hamiltonian with nearest-neighbor interactions has the form,
\begin{equation}
\mathcal{H} = \sum_{j=1}^{n} H_{j,j+1}, \label{eq:Ham}
\end{equation}
where each $H_{j,j+1}$ acts non-trivially on (at most) the sites $j,j+1$ (with $n+1$ identified with 1) and  each $\|H_{j,j+1}\| \leq 1$.  Define the ``computational basis'' $\{\ket{1},\ket{-1}\}$ for each qubit to be the eigenstates of the $\sigma^z$ operator. 
We assume $\cH$ is stoquastic in the computational basis~\cite{bravyi-2006, bravyi-2008}, which means that in this basis each $H_{j,j+1}$ is a real symmetric matrix with nonpositive off-diagonal entries,
\be
\langle z | H_{j,j+1} |z'\rangle \leq 0 \; \textrm{for all}\; z,z'\in \{-1,1\}^n \; \textrm{with}\; z\neq z'.
\ee
 For a given inverse temperature $\beta$ we consider the quantum partition function $\mathcal{Z}_\beta$ and the quantum Gibbs state $\rho_\beta$,
\begin{equation}
\rho_\beta := \frac{e^{-\beta \mathcal{H}}}{\cZ_\beta} \; \; , \; \; \cZ_\beta  := \tr \left(e^{-\beta \mathcal{H}}\right), \label{eq:Part}
\end{equation} 
and also the expectation value of observables $\mathcal{Q}$ in the thermal state, $\langle \mathcal{Q}\rangle_\beta := \tr [\cQ \rho_\beta]$.  
For the special class of systems called 1D generalized transverse Ising models (TIM) we go beyond \eqref{eq:Ham} and allow systems with long-range interactions that decay sufficiently quickly (faster than inverse square) with distance,  
\begin{align}
&\mathcal H = - \sum_{j=1}^n \Gamma_j\sigma^x_j +\sum_{j=1}^n K^z_j \sigma^z_j + \sum_{j,k=1}^n K^{zz}_{jk} \sigma^z_{j}\sigma^z_{k} \quad , \quad \Gamma:= \min_{j = 1,...,n} \Gamma_j > 0, \\  &K^{z}_j \in [-1,1]  \quad ,  \quad \textrm{ and } |K^{zz}_{jk} | \leq |i - j|^{-(2+ \xi)} \textrm{ for } \xi > 0.  \label{eq:tim} 
\end{align}
The form \eq{tim} simplifies both the presentation of the PIMC method and our proof that the mixing time of PIMC for these models is at most $\textrm{poly}(n,e^{\beta}, \Gamma^{-1})$.  We also show the mixing time of PIMC is at most $\poly(n,e^{\beta}, \Gamma^{-1})$ for generalized TIM with $\sigma^x \sigma^x$ and $\sigma^y \sigma^y$ interactions,
\ba
H =- \sum_{j=1}^n \Gamma_j\sigma^x_j -\sum_{j=1}^n K_j^{xx}\sigma^{x}_j\sigma^x_{j+1}  -\sum_{j=1}^n K_j^{yy}\sigma^{y}_j\sigma^y_{j+1} +  \sum_{j=1}^n \left (K^{zz}_j \sigma^z_j \sigma^z_{j+1} + K^z_j \sigma^z_j   \right),  \label{eq:xxham}\\
K^{zz}_j, K^{z}_j \in [-1,1] \; \;, \; \; K^{xx}_j \geq 0 \; \; , \; \; K^{yy}_j \in [-K^{xx}_j,K^{xx}_j] \; \;  \textrm{and} \; \; \Gamma:= \min_{j = 1,...,n} \Gamma_j > 0.\nonumber
\ea
The mixing time bounds can be used to construct an FPRAS (fully polynomial-time randomized approximation scheme) for the partition function of models of the form \eq{tim} and \eq{xxham} and inverse temperatures $\beta =\cO(\log n)$.   In addition to the partition function there is a similar FPRAS for the expectation value of observables that are sparse and row-computable (in the computational basis) in the thermal state $\rho_\beta$ for $\beta = \cO(\log n)$.   For general 1D stoquastic Hamiltonians \eq{Ham} we prove that the PIMC mixing time is at most quasi-polynomial in the system size, which yields a quasi-polynomial time approximation algorithm for the partition function and for the approximation of observables.  
\begin{theo}\label{theo:tim}
There is an algorithm which takes as input an inverse temperature $\beta$ and an $n$-qubit 1D stoquastic Hamiltonian $H$ and outputs an estimate $\tilde Z_\beta$ of the quantum partition function $\cZ_\beta$.

If also given an observable $\cQ\succeq 0$ that is sparse and row-computable in the computational basis, the algorithm outputs an estimate $\tilde \cQ$ of its expectation value in the state $\rho_\beta$.

These estimates satisfy
\ba |\tilde \cZ_\beta - \cZ_\beta| & \leq \delta_{\text{mult}}\cZ_\beta + \delta_{\text{add}} \\
|\tilde \cQ - \tr[\cQ \rho_\beta]| & \leq \delta_{\text{mult}}\tr[\cQ\rho_\beta] + \delta_{\text{add}} .\ea
with probability $\geq 1- \delta_{\text{fail}}$.
The algorithm runs in time at most 
$$\poly(n,e^{\beta}, \Gamma^{-1}, \delta_{\text{mult}}^{-1},\log\left(\delta_{\text{add}}^{-1}\right), \log(\delta_{\text{fail}}^{-1}))$$
for models of the form \eq{tim} and \eq{xxham}, and in time at most 
$$\poly(2^{\beta\left(\log \frac{n}{\delta_{\text{mult}}}\right)^2},\log(1/\delta_{\text{add}}), \log(1/\delta_{\text{fail}}))$$
for models of the form \eq{Ham}.
\end{theo}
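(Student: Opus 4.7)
\emph{Proof plan.} The plan is to reduce estimation of $\cZ_\beta$ and $\tr[\cQ \rho_\beta]$ to sampling from a classical $2$-dimensional Gibbs distribution via a Suzuki--Trotter decomposition, and to bound the mixing time of the resulting single-site Metropolis chain by exhibiting low-congestion canonical paths. First I would split $\cH$ into a sum of pieces whose exponentials can be written in the computational basis as nonnegative matrices (for \eq{tim} the diagonal $\sigma^z$ terms plus the transverse field; for \eq{xxham} one further bond-colors the $\sigma^x\sigma^x+\sigma^y\sigma^y$ interactions into commuting pieces) and then apply a Trotter splitting $e^{-\beta \cH} \approx (e^{-\beta A/M}e^{-\beta B/M}\cdots)^M$, inserting resolutions of the identity between every factor. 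Taking the trace yields the ``path integral'' partition function $\tilde\cZ_{\beta,M}$ of a classical spin system on an $n\times M$ lattice whose vertical (imaginary-time) couplings are strong and anisotropic but whose horizontal couplings are $1$D and decay as in \eq{tim}. Standard error bounds for Trotter--Suzuki give $|\tilde\cZ_{\beta,M}-\cZ_\beta| \le \delta_{\text{mult}}\cZ_\beta$ provided $M = \poly(n,\beta,\delta_{\text{mult}}^{-1})$, and stoquasticity ensures $\tilde\cZ_{\beta,M}$ is a bona fide classical partition function so that Metropolis sampling makes sense.

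The main technical step, and the principal obstacle, is bounding the spectral gap of the single-site Metropolis chain on this anisotropic $n\times M$ lattice. Because the vertical couplings grow with $\beta M$, none of the existing rapid-mixing results for $2$D classical spin models apply directly. I would use the canonical-paths method: for every ordered pair of configurations $(x,y)$ one constructs a weighted flow through paths of single-site transitions and bounds the gap by the inverse of the maximum edge congestion times the squared path length. The key structural feature I would exploit is that, conditioned on the ``world-line'' configuration in a given column, the columns decouple into $1$D systems, so canonical paths can be built column-by-column using $1$D correlation decay; the rapid algebraic decay condition $|K^{zz}_{jk}|\le|i-j|^{-(2+\xi)}$ is exactly what keeps boundary effects from one column to the next summable and prevents the congestion from blowing up. For \eq{tim} and \eq{xxham} I expect the resulting load to be polynomial in $n, e^\beta, \Gamma^{-1}$; for the general $1$D stoquastic Hamiltonian \eq{Ham} the same construction, without a product structure to exploit, should give a quasi-polynomial bound of the form $2^{\poly(\beta,\log n)}$, which is the source of the second runtime in the theorem.

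Given a $\poly(n,e^\beta,\Gamma^{-1})$ mixing-time bound, I would convert approximate sampling from $\tilde\cZ_{\beta,M}$ into an FPRAS for $\cZ_\beta$ by the standard simulated-annealing telescoping: choose a schedule $0=\beta_0<\beta_1<\cdots<\beta_T=\beta$ with $T=\poly(n,\beta)$ whose neighboring distributions overlap, draw $\poly(n,\beta,\delta_{\text{mult}}^{-1})$ approximately-equilibrium samples at each $\beta_k$ to estimate each ratio $\tilde\cZ_{\beta_{k+1}}/\tilde\cZ_{\beta_k}$, and multiply. Boosting to failure probability $\delta_{\text{fail}}$ is done by the standard median trick, explaining the $\log(\delta_{\text{fail}}^{-1})$ dependence, and the additive-error term $\delta_{\text{add}}$ enters through the normalization $\cZ_\beta\ge e^{-\beta n}$. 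For an observable $\cQ$ that is row-computable and sparse, I would estimate $\tr[\cQ\rho_\beta]$ by expressing it as a ratio of two closely related partition functions (insert one off-diagonal $\cQ$ matrix element along the imaginary-time loop), sample the modified classical distribution by the same PIMC chain, and use sparsity/row-computability to evaluate each sampled matrix element in polynomial time. The quasi-polynomial case \eq{Ham} follows by repeating the same reduction with the weaker mixing bound, giving the stated $\poly(2^{\beta(\log(n/\delta_{\text{mult}}))^2},\log(1/\delta_{\text{add}}),\log(1/\delta_{\text{fail}}))$ runtime.
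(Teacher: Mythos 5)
Your high-level architecture matches the paper exactly: Suzuki--Trotter mapping to a classical $n\times L$ spin system, canonical paths for the single-site Metropolis chain, telescoping product with Hoeffding/median amplification for the partition function, and a perturbed partition function for observables (the paper uses a finite difference of $\cZ_\beta(\zeta)=\tr[e^{-\beta\cH+\zeta\cQ}]$, essentially your "insert a $\cQ$ matrix element" idea made quantitative). However, the core of the proof --- why the edge congestion is actually small --- is missing or misattributed. For the TIM case the paper does not use "1D correlation decay" at all; it uses bit-fixing paths ordered worldline-by-worldline with the explicit encoding $\eta_{(\mathbf{v},\mathbf{v}')}(\gamma_{\mathbf{x},\mathbf{y}})=\mathbf{x}\cdot\mathbf{y}\cdot\mathbf{v}$ and a direct bound on the domain-wall energy $B+B'-B_X-B_Y\le 8J+4\beta+O(1)$ of the intermediate configurations; the decay condition enters only to make the cross-column interaction energy $\sum_{j\le n/2}\sum_{k>n/2}|j-k|^{-(2+\xi)}=O(1)$ summable, which you did identify. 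Your plan gives no mechanism for controlling the congestion, only the assertion that columns "decouple."

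The more serious gaps are in the two harder cases. For \eq{xxham}, the naive bit-fixing paths fail: intermediate configurations can create or destroy a non-constant number of 2-local ("double") jumps across the domain wall, and since $K^{xx}_j$ can dominate $\Gamma_j$ these configurations overload edges. The paper's fix is to insert cleanup segments that annihilate double jumps in pairs along the imaginary-time direction before updating each worldline, at the cost of an exponentially degenerate encoding whose contribution $\sum_q\binom{6L}{q}(2\beta/L)^q$ is then resummed to a constant because paths with many restored jumps have exponentially small endpoint weight. Nothing in your proposal plays this role. For general \eq{Ham}, you need two further ingredients you do not mention: a fictitious transverse field of strength $\delta_{\text{mult}}/n$ to make the chain ergodic at all (otherwise $\pi$ can vanish on configurations your paths must traverse), and the concentration bound of Section 3.5 showing that $\pi$ puts all but $n^{-c}$ of its mass on configurations with $O(\beta\log n)$ jumps per worldline, which justifies restricting to $\Omega^*$ and is precisely what produces the quasi-polynomial exponent $\beta(\log(n/\delta_{\text{mult}}))^2$ via the $\binom{4L}{B}\cdot(\beta\Gamma/L)^{-8B}$ factor with $B=O(\beta\log n)$. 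Your explanation that the quasi-polynomial bound arises from "no product structure to exploit" does not identify this mechanism and would not yield the stated runtime.
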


\paragraph{Relation to previous work.} A variety of quantum Monte Carlo methods have been devised for estimating equilibrium properties of stoquastic Hamiltonians, which can be broadly divided into diffusion methods~\cite{foulkes2001quantum, stella2007quantum,jarret2016adiabatic} and path integral methods~\cite{suzuki-1977,suzuki-1986, Martonak-2002,albash2017off,sandvik1999stochastic}.  Diffusion Monte Carlo methods use Hamiltonian matrix elements to define transition rates for a substochastic random process, along with birth and death population dynamics or importance sampling to control statistical fluctuations.  Examples of proven efficient convergence for diffusion Monte Carlo methods include simulating frustration-free stoquastic adiabatic computation~\cite{bravyi-2008} and simulating the ``go with the winners'' method~\cite{jarret2017substochastic}.  Disadvantages of diffusion methods include quantum amplitude vs quantum probability obstructions for some variants~\cite{jarret2016adiabatic,bringewatt2018diffusion}, as well as the need for importance sampling with trial wave functions that may in general not be available~\cite{bravyi-2014}.  

PIMC methods, in contrast, are based on Markov chains defined on an enlarged state space consisting of many coupled copies of a high-temperature classical system along a so-called ``imaginary-time'' direction~\cite{suzuki-1977}.  PIMC has been used to give an FPRAS for the ferromagnetic case of TIM~\cite{bravyi-2014} on general interaction graphs by a reduction to a classic result on sampling the Gibbs distribution of classical ferromagnetic Ising models \cite{jerrum-1993}.  Also, the Glauber dynamics for an infinite dimensional variant of PIMC has been shown to have optimal temporal mixing for ferromagnetic TIM at high temperature \cite{Cipriani-2010} and for ferromagnetic TIM on regular trees \cite{Martinelli-2011}.  Here ``ferromagnetic'' means of the form in \eq{tim} but with each $K_j^z=0$ and each $K_j^{zz} \leq 0$, i.e.~lower energy is achieved by aligning adjacent spins.  Since such interactions cannot be frustrated, these models are considered easier to analyze.
Disadvantages of PIMC methods include the significant overhead created by adding the imaginary-time direction, and also the possibility of slow mixing due to topological obstructions~\cite{andriyash2017can, hastings-2013}.  The general presence of obstructions to rapid mixing for the standard versions of both diffusion and path integral methods also motivates the continuing development of new algorithms for simulating stoquastic Hamiltonians~\cite{albash2017off, bravyi2016polynomial}.  

To our knowledge, the simulations in this work are the first provably efficient Monte Carlo methods for a large class of generalized TIM systems, which differ from ferromagnetic TIM by allowing couplings of both signs between neighboring spins, as well as local fields of both signs.  A generalized TIM in 2D or higher dimensions can encode NP-complete constraint satisfaction problems, which is the basis for the use of these models in quantum annealing~\cite{kadowaki1998quantum, farhi-2000,albash2016adiabatic}.   Besides their relative simplicity from the standpoint of experimental implementation, generalized TIM on interaction graphs of degree 3 are also known to be universal for stoquastic quantum annealing~\cite{bravyi-2014b,cubitt2017universal}.  The application of Monte Carlo methods to stoquastic quantum annealing is called simulated quantum annealing, and a significant open question is whether some variant of simulated quantum annealing can efficiently simulate stoquastic quantum annealing.  This open question has motivated several recent proposals for non-stoquastic quantum annealing~\cite{CrossonFLLS14,hormozi2017nonstoquastic,vinci2017non,susa2018exponential}.  Some recent progress has been made by analyzing specific models for which both quantum annealing and simulated quantum annealing can be exponentially faster than classical simulated annealing~\cite{farhi-2002, Reichardt-2004,  kong2015performance,Lidar-2015b, brady2016quantum, brady2016spectral,crosson2016simulated,jiang2017scaling}.

Another class of simulation methods for 1D quantum systems are those based on matrix product states~\cite{verstraete2008matrix}.  These methods do not require the Hamiltonian to be stoquastic and have been applied in rigorous classical simulations of systems with limited entanglement~\cite{hastings-2009,landau2013polynomial,arad2016rigorous}.   Their runtime scales exponentially with spectral gap, while our runtime bound for PIMC scales exponentially with the inverse temperature, which is qualitatively similar.
These kinds of simulations are not expected to obsolete PIMC in general, however, as the latter is used in practice to simulate highly entangled stoquastic systems.  Furthermore, the version of PIMC that we consider is closely related to the practical implementations of PIMC, and so this work can be seen as a rigorous analysis of a heuristic approximation algorithm that was originally developed and applied with great success by the physics community.

\paragraph{Organization of the remaining sections.}  \secref{overview} reviews the PIMC method and sketches the proof of our mixing time analysis.  \secref{qtc} defines the Suzuki-Trotter approximation to the partition function which is used to map 1D stoquastic Hamiltonians onto 2D classical spin systems.  \secref{approxObservables} and \secref{estimatePartitionFunction} describe the use of samples from the Gibbs distribution of the 2D classical system to approximate quantum observables and the quantum partition function.  \secref{restricting} proves a concentration theorem for PIMC that justifies the analysis of a restricted state space in later sections.  \secref{markovchains} reviews the canonical paths method that we use to analyze the PIMC mixing time.  This method is then applied to generalized TIM in \secref{transverseIsing}, generalized TIM with XX interactions in \secref{XX}, and general 1D stoquastic Hamiltonians in \secref{general1D}.
\section{Overview}\label{sec:overview}

In the PIMC method the Hamiltonian \eqref{eq:Ham} is related by Suzuki's quantum-to-classical mapping \cite{suzuki-1986,suzuki-2013}  to a system of $\{\pm 1\}$ classical spins on a 2D lattice $\Lambda = \{1,\ldots,L\}\times \{1,\ldots,n\}$.  The new site index $\{1,\ldots,L\}$ is known in physics as the ``imaginary-time'' direction, and it is often useful to think of the set of spin configurations $\Omega:= \{-1,1\}^{n\times L}$ as consisting of ``time slices'' along the $L$ direction, i.e. for $\mathbf{z} \in \Omega$ we write $\mathbf{z} := (z_1,\ldots,z_L)$ with each $z_i \in \{-1,1\}^n$.   The mapping introduces a local energy function $E_\beta:\Omega\rightarrow \mathbb{R}$ on the classical spin configurations in such a way that an estimate of the classical partition function $Z := \sum_{\mathbf{z} \in \Omega}e^{-\beta E_\beta(\mathbf{z})}$ can be used to estimate the quantum partition function $\mathcal{Z}$, and so that samples from the Gibbs distribution $\pi(\mathbf{z}) := e^{-\beta E_\beta(\mathbf{z})}/Z$ can be used to estimate expectation values in the quantum Gibbs state.  For the nearest-neighbor version of the transverse Ising models in \eqref{eq:tim} the energy function $E_\beta$ is
\begin{equation}
E_\beta(\mathbf{z}) = \sum_{(i,j) \in \Lambda} \left( \frac{K^{zz}_{j,j+1}}{L} z_{i,j} z_{i,j+1} + \frac{K^z_j}{L} z_{i,j} - \beta^{-1}J_i z_{i,j}z_{i+1,j}\right) \quad , \quad  z_{L+1,j} := z_{1,j} ,\label{eq:timenergy}
\end{equation}
where $z_{i,j}$ denotes the spin in configuration $\mathbf{z}$ at the site $(i,j) \in \Lambda$, and $J_i := \frac{1}{2} \log \coth \left(\frac{\beta \Gamma_i}{L}\right)$ (see \secref{qtc}).   The couplings along the imaginary-time direction are ferromagnetic, but the couplings along the spatial direction can have varying signs as well as local fields.  Our choices of parameters are such that $L = \poly(n)$, and so the general case considers $J_i$ with magnitude up to $\mathcal{O}(\log n)$.  Therefore the couplings of the 2D model are highly anisotropic, as in figure~\ref{fig:anisotropic}.  
\begin{figure}[H]
\captionsetup{width=0.8\textwidth}
\begin{center}\begin{tikzpicture}[xscale=1,yscale=.7]
\node [draw,shape=circle,inner sep=1.5pt,fill] (v14) at (-3,3) {};
\node [draw,shape=circle,inner sep=1.5pt,fill] (v15) at (-3,2) {};
\node [draw,shape=circle,inner sep=1.5pt,fill] (v17) at (-3,1) {};
\node [draw,shape=circle,inner sep=1.5pt,fill] (v19) at (-3,0) {};
\node [draw,shape=circle,inner sep=1.5pt,fill] (v21) at (-3,-1) {};
\node [draw,shape=circle,inner sep=1.5pt,fill] at (-1.5,-1) {};
\node [draw,shape=circle,inner sep=1.5pt,fill] at (-1.5,0) {};
\node [draw,shape=circle,inner sep=1.5pt,fill] at (-1.5,1) {};
\node [draw,shape=circle,inner sep=1.5pt,fill] at (-1.5,2) {};
\node [draw,shape=circle,inner sep=1.5pt,fill] at (-1.5,3) {};
\node at (1,-2) {\large space};
\node [rotate=90] at (-4,1) {\large imaginary-time};
\node [draw,shape=circle,inner sep=1.5pt,fill] at (0,-1) {};
\node [draw,shape=circle,inner sep=1.5pt,fill] at (0,0) {};
\node [draw,shape=circle,inner sep=1.5pt,fill] at (0,1) {};
\node [draw,shape=circle,inner sep=1.5pt,fill] at (0,2) {};
\node [draw,shape=circle,inner sep=1.5pt,fill] at (0,3) {};
\node [draw,shape=circle,inner sep=1.5pt,fill] at (1.5,3) {};
\node [draw,shape=circle,inner sep=1.5pt,fill] at (1.5,2) {};
\node [draw,shape=circle,inner sep=1.5pt,fill] at (1.5,1) {};
\node [draw,shape=circle,inner sep=1.5pt,fill] at (1.5,0) {};
\node [draw,shape=circle,inner sep=1.5pt,fill] at (1.5,-1) {};
\node [draw,shape=circle,inner sep=1.5pt,fill] at (3,-1) {};
\node [draw,shape=circle,inner sep=1.5pt,fill] at (3,0) {};
\node [draw,shape=circle,inner sep=1.5pt,fill] at (3,1) {};
\node [draw,shape=circle,inner sep=1.5pt,fill] at (3,2) {};
\node [draw,shape=circle,inner sep=1.5pt,fill] at (3,3) {};
\node [draw,shape=circle,inner sep=1.5pt,fill] (v13) at (4.5,3) {};
\node [draw,shape=circle,inner sep=1.5pt,fill] (v16) at (4.5,2) {};
\node [draw,shape=circle,inner sep=1.5pt,fill] (v18) at (4.5,1) {};
\node [draw,shape=circle,inner sep=1.5pt,fill] (v20) at (4.5,0) {};
\node [draw,shape=circle,inner sep=1.5pt,fill] (v22) at (4.5,-1) {};
\node (v2) at (-3,3.6) {};
\node (v1) at (-3,-1.6) {};
\node (v3) at (-1.5,-1.6) {};
\node (v4) at (-1.5,3.6) {};
\node (v6) at (0,3.6) {};
\node (v5) at (0,-1.6) {};
\node (v7) at (1.5,-1.6) {};
\node (v8) at (1.5,3.6) {};
\node (v10) at (3,3.6) {};
\node (v9) at (3,-1.6) {};
\node (v11) at (4.5,-1.6) {};
\node (v12) at (4.5,3.6) {};
\draw [blue]  (v1) edge (v2);
\draw  [blue](v3) edge (v4);
\draw  [blue](v5) edge (v6);
\draw  [blue](v7) edge (v8);
\draw [blue] (v9) edge (v10);
\draw [blue] (v11) edge (v12);
\draw [blue, dotted,thick] (v13) edge (v14);
\draw [blue, dotted, thick] (v15) edge (v16);
\draw [blue, dotted, thick] (v17) edge (v18);
\draw [blue, dotted, thick] (v19) edge (v20);
\draw [blue, dotted, thick] (v21) edge (v22);
\end{tikzpicture}
\end{center}
\caption{\small The local fields and the couplings along the spatial direction can be either positive or negative but have magnitudes reduced by $1/L$ (dotted lines), whereas the couplings along the imaginary-time direction can have magnitude growing logarithmically with system size but are always ferromagnetic (solid lines).}\label{fig:anisotropic}
\end{figure}
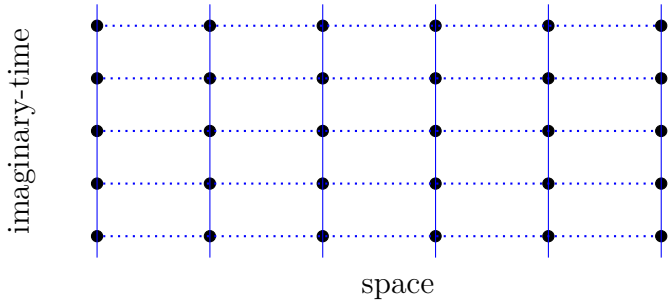
The PIMC method proceeds by attempting to sample from $\pi$ and approximate $Z$ using the Markov chain Monte Carlo method with single-site Metropolis transition probabilities.  With this in mind, the primary question is to determine the \emph{mixing time} $\tau$ for this Markov chain, which is the minimum number of steps of this Markov chain needed to sample from a distribution that is close to $\pi$.  We will use the method of canonical paths that was originally introduced by Jerrum and Sinclair (reviewed in \cite{jerrum-1996,peres-2008}) in order to bound the mixing time.  The idea of the method is that for every pair of states $\mathbf{x},\mathbf{y}\in \Omega$ the Markov chain $P$ needs to route an amount of ``traffic'' equal to $\pi(\mathbf{x})\pi(\mathbf{y})$ along some path $\gamma_{\mathbf{x},\mathbf{y}}$ which connects $\mathbf{x}$ to $\mathbf{y}$ using the transitions of the Markov chain.   Define the {\em congestion} across a particular transition $e = (\mathbf{z},\mathbf{z}')$ to be the ratio of the total traffic through it (i.e. $\sum_{\mathbf{x},\mathbf{y} : e \in \gamma_{\mathbf{x},\mathbf{y}}} \pi(\mathbf{x})\pi(\mathbf{y})$) to the amount of probability flow across it at equilibrium, which equals $\pi(\mathbf{z})P(\mathbf{z},\mathbf{z}')$.  The mixing time can then be upper bounded in terms of the maximum congestion through any transition, using standard arguments which we review in \secref{markovchains}.

Our problem now reduces to finding good sets of paths between every pair of configurations $\mathbf{x},\mathbf{y}\in\Omega$.  We use a set of paths inspired by those used to analyze the (unweighted) random walk on the boolean hypercube~\cite{jerrum-1996}.  Specifically we will replace the entries of $\mathbf{x}$ with entries of $\mathbf{y}$ one at a time.  A crucial requirement is that the energies of the intermediate states do not get too large, as this would create a large congestion.  For the model \eqref{eq:timenergy} the paths first update the classical spins corresponding to the site of qubit 1 in consecutive imaginary-time order, and then proceed to do the same for the sites of qubits 2 through $n$.  For each qubit $j \in \{1,\ldots,n\}$ the line of sites $\{(1,j),(2,j),\ldots,(L,j)\}$ is called the worldline of the $j$-th qubit, and so these canonical paths have the form of updating each qubit worldline consecutively, always finishing one worldline before starting the next.
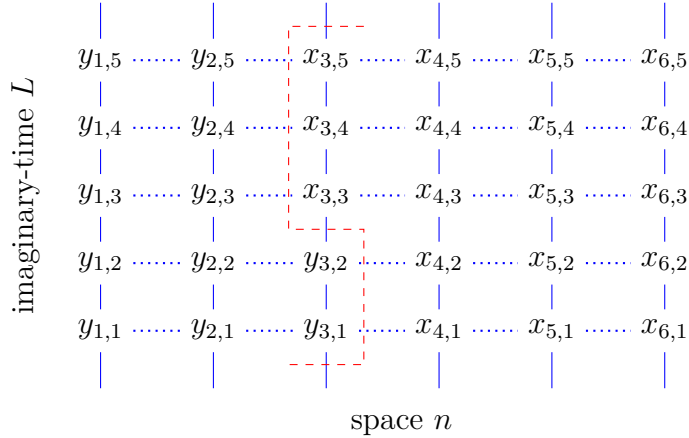
\begin{figure}
\captionsetup{width=0.89\textwidth}
\begin{center}
\begin{tikzpicture}[xscale = 1,yscale=0.9]
\node (v14) at (-3,3) {\large $y_{1,5}$};
\node (v15) at (-3,2) {\large $y_{1,4}$};
\node (v17) at (-3,1) {\large $y_{1,3}$};
\node (v19) at (-3,0) {\large $y_{1,2}$};
\node (v21) at (-3,-1) {\large $y_{1,1}$};
\node (v23) at (-1.5,-1) {\large $y_{2,1}$};
\node (v24) at (-1.5,0) {\large $y_{2,2}$};
\node (v25) at (-1.5,1) {\large $y_{2,3}$};
\node (v26) at (-1.5,2) {\large $y_{2,4}$};
\node (v27) at (-1.5,3) {\large $y_{2,5}$};
\node (v28) at (0,-1) {\large $y_{3,1}$};
\node (v29) at (0,0) {\large $y_{3,2}$};
\node (v30) at (0,1) {\large $x_{3,3}$};
\node (v31) at (0,2) {\large $x_{3,4}$};
\node (v32) at (0,3) {\large $x_{3,5}$};
\node (v37) at (1.5,3) {\large $x_{4,5}$};
\node (v36) at (1.5,2) {\large $x_{4,4}$};
\node (v35) at (1.5,1) {\large $x_{4,3}$};
\node (v34) at (1.5,0) {\large $x_{4,2}$};
\node (v33) at (1.5,-1) {\large $x_{4,1}$};
\node (v38) at (3,-1) {\large $x_{5,1}$};
\node (v39) at (3,0) {\large $x_{5,2}$};
\node  (v40) at (3,1) {\large $x_{5,3}$};
\node (v41) at (3,2) {\large $x_{5,4}$};
\node  (v42) at (3,3) {\large $x_{5,5}$};
\node (v13) at (4.5,3) {\large $x_{6,5}$};
\node  (v16) at (4.5,2) {\large $x_{6,4}$};
\node (v18) at (4.5,1) {\large $x_{6,3}$};
\node (v20) at (4.5,0) {\large $x_{6,2}$};
\node (v22) at (4.5,-1) {\large $x_{6,1}$};
\node (v2) at (-3,4) {};
\node (v1) at (-3,-2) {};
\node (v3) at (-1.5,-2) {};
\node (v4) at (-1.5,4) {};
\node (v6) at (0,4) {};
\node (v5) at (0,-2) {};
\node (v7) at (1.5,-2) {};
\node (v8) at (1.5,4) {};
\node (v10) at (3,4) {};
\node (v9) at (3,-2) {};
\node (v11) at (4.5,-2) {};
\node (v12) at (4.5,4) {};
\draw [blue] (v1) -- (v21) -- (v19) -- (v17) -- (v15) -- (v14) -- (v2);
\node at (1,-2.4) {\large space $n$};
\node [rotate=90] at (-4,1) {\large imaginary-time $L$};
\draw [blue](v3) -- (v23) -- (v24) -- (v25) -- (v26) -- (v27) -- (v4);
\draw [blue](v5) -- (v28) -- (v29) -- (v30) -- (v31) -- (v32) -- (v6);
\draw [blue](v7) -- (v33) -- (v34) -- (v35) -- (v36) -- (v37) -- (v8);
\draw [blue](v9) -- (v38) -- (v39) -- (v40) -- (v41) -- (v42) -- (v10);
\draw [blue](v11) -- (v22) -- (v20) -- (v18) -- (v16) -- (v13) -- (v12);
\node (v43) at (0.5,3.5) {};
\node (v44) at (-0.5,3.5) {};
\node (v45) at (-0.5,0.5) {};
\node (v46) at (0.5,0.5) {};
\node (v47) at (0.5,-1.5) {};
\node (v48) at (-0.5,-1.5) {};
\draw [inner sep = 0pt,red, dashed](v43.center) -- (v44.center) -- (v45.center) -- (v46.center) -- (v47.center) -- (v48.center);
\draw [blue, dotted, thick](v14) -- (v27) -- (v32) -- (v37) -- (v42) -- (v13);
\draw [blue, dotted, thick](v15) -- (v26) -- (v31) -- (v36) -- (v41) -- (v16);
\draw [blue, dotted, thick](v17) -- (v25) -- (v30) -- (v35) -- (v40) -- (v18);
\draw [blue, dotted, thick](v19) -- (v24) -- (v29) -- (v34) -- (v39) -- (v20);
\draw [blue, dotted, thick](v21) -- (v23) -- (v28) -- (v33) -- (v38) -- (v22);
\end{tikzpicture}
\end{center}
\caption{\small An intermediate point along a canonical path from $x$ to $y$.  We have finished updating qubits 1 and 2 and are in the middle of updating qubit 3.  The red line indicates the domain wall formed between the $x$ and $y$ configurations.  The change in the classical energy created by this domain wall determines the Markov chain congestion for this set of canonical paths.}
\label{fig:path}
\end{figure}

An example of an intermediate step along this path is depicted in \fig{path}. Notice that intermediate steps have only introduced a constant number of broken bonds between spins that are neighbors in the imaginary time direction, each contributing an amount of energy that is $\cO(\beta^{-1}\log n)$, and at most $L$ broken bonds between spins that are neighbors in the space direction, each contributing energy $\cO(1/L)$. The relatively low energy of the intermediate configurations along the path is the key to the formal proof of rapid mixing for PIMC applied to 1D generalized TIM in \secref{transverseIsing}.  For 1D generalized TIM with long-range interactions, the energy of the intermediate configuration $\mathbf{z}$ will include contributions from spins that are arbitrarily far away from the cut.  Using the assumption on the rate of decay of the magnitude of these interactions, the total contribution to the energy is upper bounded by
$$
\sum_{j = 1}^{n/2} \sum_{k = \frac{n}{2} + 1}^{n} |j - k|^{-(2 + \xi)}  = \mathcal{O}(\xi^{-2}  + n^{-\xi}),
$$
where the bound is proven by viewing the summation as a right Riemann sum of the corresponding integral, which yields an upper bound because the integrand is a monotonically decreasing function.  Since we regard $\xi$ as a constant this contribution to the energy is $\mathcal{O}(1)$.  

For Hamiltonians of the form \eqref{eq:xxham} the 2-local off-diagonal terms give rise to a classical energy function with interactions involving four spins at a time,
\begin{align}
E_\beta(\mathbf{z}) := \sum_{(i,j) \in \Lambda} \left( \frac{K^{zz}_j}{L} z_{i,j} z_{i,j+1} + \frac{K^z_j}{L} z_{i,j} - \beta^{-1} h_{i,j}\left(z_{i,j},z_{i,j+1},z_{i+1,j},z_{i+1,j+1}\right) \right) \label{eq:generalenergy} ,
\end{align}
The $h_{i,j}$ terms (which are implicitly defined by the Gibbs distribution \eqref{eq:generalQtCpartitionFunction}) correspond to strong ferromagnetic interactions between neighboring pairs of spins in neighboring time slices, and violating them increases the energy by an amount that is $\mathcal{O}(\log n)$.  If $z_{i,j} \neq z_{i+1,j}$ and $z_{i,j+1} \neq z_{i+1,j+1}$ then we say there is a 2-local jump between time slice $i$ and time slice $i+1$, because the resulting energy contribution from the 4-body interaction will depend on the magnitude of the 2-local off-diagonal quantum terms i.e. $K^{xx}_j$ and $K^{yy}_j$.

The canonical paths we have described so far will lead to intermediate configurations that potentially create or destroy a non-constant number of 2-local jumps across the domain wall pictured in figure~\ref{fig:path}.  If $K^{xx}_j$ is larger than $K_j^x$, then configurations $\mathbf{x}$ and $\mathbf{y}$ which contain a non-constant number of 2-local jumps will result in the canonical paths described so far overloading some edges with large congestion.  To overcome this we include additional path segments along the canonical paths that remove 2-local jumps from the worldlines near the domain wall in figure~\ref{fig:path}.  This greatly increases the number of paths passing through each edge, but we are still able to obtain a $\poly(n,e^\beta)$ upper bound on the congestion by noticing that most of these paths have endpoints with very small stationary probabilities.

For Hamiltonians of the form \eqref{eq:Ham} it is possible that the off-diagonal terms in \eqref{eq:Ham} do not allow for the PIMC Markov chain with single-site transitions to be ergodic.  In this situation some configurations $\mathbf{z} \in \Omega$ will have $\pi(\mathbf{z}) = 0$, which creates a new obstacle to the canonical paths we have described so far.  To restore ergodicity under single-site updates we add a fictitious transverse field to the Hamiltonian with a weak coupling that goes to zero with system size:
\be
H \rightarrow H - \Gamma \sum_{i=1}^n \sigma^x_i
\ee
with $\Gamma \leq \delta_{\text{mult}}/n$ so that the subsequent estimate for the partition function will not be further than the allowed error tolerance away from its intended value. 

Once the fictitious field is added it once again makes sense to attempt to use the same canonical paths that were applied to systems of the form \eqref{eq:xxham}.  But now the fact that the off-diagonal matrix elements of the Hamiltonian are no longer symmetric under the global operation of flipping all spins means that removing a single (1-local or 2-local) jump from a given worldline can decrease the stationary probability of that configuration by a factor scaling exponentially in the number of jumps remaining in that worldline.  To overcome this problem we show in \secref{restricting} that typical configurations distributed according to $\pi$ will have at most $\mathcal{O}(\beta \log n)$ jumps in any particular worldline.  This means that the worst-case intermediate configurations just described may have a stationary probability reduced by $n^{O(\beta \log n)}$, and this factor bottlenecks the mixing time estimate.  Using these facts we describe canonical paths in \secref{general1D} within a restricted state space $\Omega^*$ in which every configuration has $\cO(\beta \log n)$ jumps per worldline, and by either restricting the PIMC Markov chain to $\Omega^*$ or applying the leaky random walk conductance bound~\cite{crosson2016simulated} we obtain a bound on the mixing time within the subset $\Omega^*$ that establishes theorem~\ref{theo:tim}.

\section{Path integral Monte Carlo} \label{sec:PIMCcorrect}
Sections \ref{sec:qtc} through \ref{sec:estimatePartitionFunction} rigorously present the underpinnings of the standard PIMC method: the quantum-to-classical mapping, and its application to approximating expectation values of observables as well as the partition function.  \secref{restricting} proves a general concentration of measure bound which shows that the number of jumps along any worldline in PIMC is $\mathcal{O}(\beta \log n)$ with high probability.    Note that this result applies to general stoquastic Hamiltonians (not just the 1D case).  This result is used to analyze a restricted state space in \secref{general1D}.
\subsection{Quantum-to-classical mapping}\label{sec:qtc}
The first step of the quantum-to-classical mapping is to split $H$ as in \eqref{eq:Ham} into a sum of commuting Hamiltonians.   For each local term $H_i$ we construct local operators $A_i, B_i, C_i$ such that $-\beta H_i = A_i + B_i + C_i$ and $[A_i,A_j] = [B_i, B_j] = [C_i,C_j]=0$ for all $i,j$.  Let $A = \sum_i A_i$, $B= \sum_i B_i$, $C = \sum_i C_i$, and hereafter let $A_i$ match the diagonal part of $H_i$ in the computational basis, while $B_i$ and $C_i$ contain the off-diagonal terms of $H_i$ when $i$ is odd or even, respectively.    For positive integers $L$ define the Suzuki-Trotter approximation of the partition function $Z = \tr\left[e^{A + B + C}\right]$,
\begin{equation}
Z_{\beta,L} := \tr\left [\left(e^{\frac{A}{2L}} e^{\frac{B}{L}}e^{\frac{A}{2L}}e^{\frac{C}{L}}\right)^L\right].\label{eq:part-Trotter}
\end{equation}
The fact that $\lim_{L\ra \infty}Z_{\beta,L} = \cZ_\beta$ is the content of the 1875 Lie product formula, while for finite $L$ the Suzuki-Trotter approximation is related to the quantum partition function \eq{Part} by
\be 
e^{-\delta} \cZ_{\beta} \leq
Z_{\beta, L}  \leq e^{\delta}\cZ_{\beta} ,
\label{eq:part-approx}
\ee
for some $\delta$ that is $\cO\left(\|H\|^3/L^3\right)$.  This error bound follows from two facts about the matrix exponential: a variant of the usual Suzuki-Trotter expansion, and a continuity bound.  Both appear in \cite{bravyi-2014} for transverse Ising models, and the proof for three commuting layers $A,B,C$ is similar and so we omit the details.

Before considering the general case of $H$ as in \eq{Ham} it is useful to consider the simplified case of \eq{part-Trotter} for generalized TIM of the form \eq{tim} by taking $A = -\beta \diag(H)$, $B = -\beta\left(H - \diag(H)\right)$, and $C = 0$ so that the Suzuki-Trotter approximation reduces to
\begin{equation}
Z_{\beta,L} = \tr\left [\left(e^{\frac{A}{2L}}
e^{\frac{B}{L}}e^{\frac{A}{2L}}
\right)^L\right]= \tr\left [\left(e^{\frac{A}{L}}
e^{\frac{B}{L}}
\right)^L\right].\label{eq:part-Trotter-tim}
\end{equation}

For transverse Ising models, the next step is to expand the Suzuki-Trotter approximation \eqref{eq:part-Trotter-tim} to the partition function by inserting several complete sets of states,
\begin{align}
Z_{\beta, L} &= \sum_{z_1\in \{-1,1\}^n} \langle z_1 | \left(e^{\frac{A}{L}} e^{\frac{B}{L}}\right)^L|z_1\rangle \label{QtC1}\\
&= \sum_{z_1,\ldots,z_L} \prod_{i=1}^{L}\langle z_{i}| e^{\frac{A}{L}} e^{\frac{B}{L}}|z_{i+1}\rangle \; \; \; , \; \; z_{L+1} := z_1\\
&= \sum_{z_1,\ldots,z_L} \prod_{i=1}^L e^{\frac{A(z_{i})}{L}}\langle z_{i}| e^{\frac{B}{L}}|z_{i+1} \rangle \label{QtC3}.
\end{align}
where in the last step we have used the fact that $A = -\beta \diag(H)$ is diagonal in the computational basis and defined $A(z_i) := \langle z_i | A | z_i\rangle$.  The Gibbs distribution of the classical 2D spin system is defined by equating $e^{- \beta E_{\beta}(z_1,\ldots,z_L)}$ with the summand of \eqref{QtC3}.  The closed form \eqref{eq:timenergy} is obtained applying the identity $e^{\omega \sigma^x} = \cosh(\omega)I + \sinh(\omega)\sigma^x$ to
$$
\langle z_{i}| e^{\frac{B}{L}}|z_{i+1}\rangle  = \langle z_{i}| e^{-\frac{\beta \Gamma}{L} \sum_{j=1}^n \sigma^x_j}|z_{i+1}\rangle = \prod_{j=1}^n \langle z_{i,j}|e^{-\frac{\beta \Gamma}{L} \sigma^x_j}|z_{i+1,j}\rangle. \label{eq:tim-qtc-offdiag}
$$
For a general 1D stoquastic Hamiltonian of the form in \eq{Ham} we will apply \eq{part-Trotter} with
\be
A = -\beta \diag(H)  \; \; \; , \; \; \; B = -\beta \sum_{i \in \textnormal{Odd}} \L[H_i - \diag(H_i) \R]  \; \; \; , \; \; \; C = -\beta \sum_{i\in \textnormal{Even}}\L[ H_i - \diag(H_i) \R],
\ee
Expanding \eq{part-Trotter} with $2 L$ complete sets of states,
\begin{align}
Z_{\beta,L} &= \sum_{z_1\in \{-1,1\}^n} \langle z_1| \left(e^{\frac{A}{2L}} e^{\frac{B}{L}}
e^{\frac{A}{2L}}e^{\frac{C}{L}}\right)^L | z_1 \rangle\\
&= \sum_{z_1,\ldots,z_{2L}} \prod_{i\in \textrm{Odd}}^{2L - 1}\langle z_i |e^{\frac{A}{2L}} e^{\frac{B}{L}}|z_{i+1}\rangle \langle z_{i+1}|e^{\frac{A}{2L}}e^{\frac{C}{L}} | z_{i+2} \rangle \label{eq:qtcGen2}\\
&=\sum_{z_1,\ldots,z_{2L}} e^{\sum_{i=1}^{2L} \frac{A(z_i)}{2 L} }\prod_{i\in \textrm{Odd}}^{2L - 1}\langle z_i |e^{\frac{B}{L}}|z_{i+1}\rangle \langle z_{i+1}|e^{\frac{C}{L}}| z_{i+2} \rangle\label{eq:qtcGen3},
\end{align}
where periodic boundary conditions $z_{2 L +1} := z_1$ are taken in \eq{qtcGen2}.  At this point the essential difference from the transverse Ising case is clear: the off-diagonal part of the Hamiltonian is split into two commuting layers, these layers alternate along the imaginary-time direction, and this causes the coupling between classical spins at neighboring time slices to alternate as well.   For ease of notation define $B_i$ to be $2B$ if $i$ is odd, and $2C$ if $i$ is even, and rescale $L\rightarrow L/2$ to obtain a distribution $\pi$ from \eq{qtcGen3},
\be
\pi(\mathbf{z}) := \frac{1}{Z_{\beta,L}}e^{\sum_{i=1}^{L} \frac{A(z_i)}{L} }\prod_{i = 1}^{L}\langle z_i |e^{\frac{B_i}{L}}|z_{i+1} \rangle\label{eq:generalQtCpartitionFunction} .
\ee
\subsection{Approximating observables}\label{sec:approxObservables} 

{In this section we consider an observable $\cO$ for which we would like to estimate $\vev{\cO}_\beta := \tr[\cO\rho_\beta]$. We justify this by demonstrating closeness of $\vev{\cO}_\beta$ to a value $\langle O\rangle_{\beta,L}$ that is defined in terms of the Suzuki-Trotter approximation in equation \eqref{eq:st-observable}.  } The expectation $\vev{\cO}_\beta$ can be expressed as the derivative of a partition function by defining
\be \cZ_\beta(\zeta) := \tr[e^{-\beta \cH + \zeta \cO}],\label{eq:Z-zeta}\ee
and using the fact that 
\be
\vev{\cO}_\beta = \frac{1}{ \cZ_\beta(0)}\frac{\partial  \cZ_\beta}{\partial \zeta}(0).
\ee
The next proposition shows that $\langle \cO \rangle_{\beta}$ can be approximated by a finite difference.
\begin{prop}
Suppose that $\hat{\cZ}_\beta(0)$ and $\hat{\cZ}_\beta(\zeta)$ satisfy
$$e^{-\delta} \leq \frac{\hat{\cZ}_\beta(0)}{{\cZ}_\beta(0)},
\frac{\hat{\cZ}_\beta(\zeta)}{{\cZ}_\beta(\zeta)} \leq e^{\delta},$$
for $\zeta = \sqrt{3\delta}/\|\cO\|$.
Then the estimator
\be \widehat{\vev{\cO}_\beta}  := \frac{\hat\cZ_\beta(\zeta) - \hat\cZ_\beta(0)}{\zeta\hat\cZ_\beta(0)}.\label{eq:finiteDiff}\ee
satisfies the bound
\be
|\widehat{\vev{\cO}_\beta} - {\vev{\cO}_\beta}|
\leq 2\sqrt{\delta} \|\cO\|\ee
\end{prop}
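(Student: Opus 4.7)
The plan is to split the total error into a finite-difference bias (the gap between the exact-ratio $(\cZ_\beta(\zeta)-\cZ_\beta(0))/(\zeta\cZ_\beta(0))$ and the true derivative $\vev{\cO}_\beta$) plus the propagation of the multiplicative error $\delta$ through the ratio of estimators in \eq{finiteDiff}. Each part will contribute $\Theta(\sqrt\delta\,\|\cO\|)$, and the prescribed $\zeta=\sqrt{3\delta}/\|\cO\|$ is essentially the balance point of the two contributions.

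For the bias I would apply Taylor's theorem with exact remainder,
\begin{equation*}
\cZ_\beta(\zeta)=\cZ_\beta(0)+\zeta\,\cZ_\beta'(0)+\tfrac{\zeta^2}{2}\cZ_\beta''(\xi)\qquad\text{for some }\xi\in(0,\zeta),
\end{equation*}
and use $\cZ_\beta'(0)=\tr[\cO e^{-\beta\cH}]=\vev{\cO}_\beta\cZ_\beta(0)$, so that the bias is exactly $\tfrac{\zeta}{2}\cZ_\beta''(\xi)/\cZ_\beta(0)$. The technical core is the operator inequality $\cZ_\beta''(\zeta)\le\|\cO\|^2\cZ_\beta(\zeta)$. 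By Duhamel's formula, $\cZ_\beta''(\zeta)=\int_0^1\tr\!\bigl[\cO\, e^{sH_\zeta}\cO\, e^{(1-s)H_\zeta}\bigr]\,ds$ with $H_\zeta:=-\beta\cH+\zeta\cO$; expanding in the eigenbasis of $H_\zeta$ with eigenvalues $\lambda_i$ gives $\sum_{i,j}|\cO_{ij}|^2 e^{s\lambda_j+(1-s)\lambda_i}$, and invoking the scalar convexity bound $e^{s\lambda_j+(1-s)\lambda_i}\le se^{\lambda_j}+(1-s)e^{\lambda_i}$ collapses the integrand to at most $\tr[\cO^2 e^{H_\zeta}]\le\|\cO\|^2\cZ_\beta(\zeta)$. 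Combining with $\cZ_\beta(\xi)\le e^{\zeta\|\cO\|}\cZ_\beta(0)$, which follows from $\xi\cO\preceq\xi\|\cO\|I$ and the trace-monotonicity of the exponential (Klein's inequality), bounds the bias by $\tfrac{\zeta\|\cO\|^2}{2}e^{\zeta\|\cO\|}$.

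For the multiplicative part, write $r_\zeta:=\hat\cZ_\beta(\zeta)/\cZ_\beta(\zeta)$ and $r_0:=\hat\cZ_\beta(0)/\cZ_\beta(0)$, both in $[e^{-\delta},e^\delta]$. A short algebraic manipulation gives
\begin{equation*}
\widehat{\vev{\cO}_\beta}-\frac{\cZ_\beta(\zeta)-\cZ_\beta(0)}{\zeta\,\cZ_\beta(0)}=\frac{\cZ_\beta(\zeta)}{\zeta\,\cZ_\beta(0)}\!\left(\frac{r_\zeta}{r_0}-1\right),
\end{equation*}
whose magnitude is at most $\zeta^{-1}e^{\zeta\|\cO\|}(e^{2\delta}-1)$. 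Substituting $\zeta=\sqrt{3\delta}/\|\cO\|$ into both error terms and using $e^{2\delta}-1\le 2\delta e^{2\delta}$ yields a sum of the form $\bigl(\tfrac{\sqrt 3}{2}+\tfrac{2}{\sqrt 3}\bigr)\sqrt\delta\,\|\cO\|$ multiplied by a factor $e^{O(\sqrt\delta)}$ that tends to $1$ as $\delta\to 0$, which rounds to the advertised $2\sqrt\delta\,\|\cO\|$. The one place that requires genuine care is the non-commutative estimate $\cZ_\beta''(\zeta)\le\|\cO\|^2\cZ_\beta(\zeta)$, because $\cH$ and $\cO$ need not commute; the rest of the argument is Taylor expansion and bookkeeping of exponentials.
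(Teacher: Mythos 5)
Your proof follows essentially the same route as the paper's: Taylor's theorem for the finite-difference bias, the key bound $\partial_\zeta^2\cZ_\beta(\zeta')\le \|\cO\|^2\cZ_\beta(\zeta')$ (which the paper gets by citing $\tr AB\le\|A\|\,\|B\|_1$ and which you establish more carefully via Duhamel's formula and convexity in the eigenbasis of $H_{\zeta'}$ --- a worthwhile elaboration, since the non-commutativity is the one genuinely delicate point), the comparison $\cZ_\beta(\zeta')\le e^{\zeta\|\cO\|}\cZ_\beta(0)$, and the same propagation of the $e^{\pm\delta}$ multiplicative errors through the ratio of estimators. Your bookkeeping is in fact tighter than the paper's (you retain the factor $\tfrac12$ from the Taylor remainder), though in both versions the prescribed $\zeta=\sqrt{3\delta}/\|\cO\|$ leaves the final constant slightly above the advertised $2$ --- you obtain $\tfrac{7}{2\sqrt3}\,e^{O(\sqrt\delta)}\approx 2.02\,e^{O(\sqrt\delta)}$ while the paper's own chain gives $2\sqrt3\,e^{\sqrt{3\delta}}$ --- so the stated bound should really be read as $O(\sqrt\delta\,\|\cO\|)$ in either treatment; this is a defect of the proposition's constant, not of your argument relative to the paper's.
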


\begin{proof}
We use first Taylor's theorem, then the bound $\tr AB \leq \|A\| \|B\|_1$ to obtain: 
\begsub{discrete-zeta}
\L|\cZ_\beta(\zeta) - \cZ_\beta(0) - \zeta \frac{\partial  \cZ_\beta}{\partial \zeta}(0)\R| 
&\leq \frac{\zeta^2}{2} \max_{\zeta' \in [0,\zeta]} \frac{\p^2\cZ_\beta}{\p\zeta^2}(\zeta')
\\ & \leq  \zeta^2 \|\cO\|^2 
\max_{\zeta' \in [0,\zeta]} \cZ_\beta(\zeta')
\\ & \leq \zeta^2 \|\cO\|^2 \cZ_\beta(0) e^{\zeta \|\cO\|}.
\endsub
Dividing both sides of \eq{discrete-zeta} by $\zeta\cZ_\beta(0)$ we  have
\be \L| \vev{\cO}_\beta - \frac{\cZ_\beta(\zeta)-\cZ_\beta(0)}{\zeta \cZ_\beta(0)} \R|=
\L| \vev{\cO}_\beta -\L( \frac{\cZ_\beta(\zeta)}{\zeta \cZ_\beta(0)} - \frac 1 \zeta\R) \R|
\leq \zeta \|\cO\|^2  e^{\zeta \|\cO\|}.
\label{eq:discrete-O}\ee

Then
\ba
\widehat{\vev{\cO}_\beta} - {\vev{\cO}_\beta}
& = \frac{\hat\cZ_\beta(\zeta)}{\zeta\hat\cZ_\beta(0)} - \frac 1 \zeta - \vev{\cO}_\beta \\
& \leq e^{2\delta} \frac{\cZ_\beta(\zeta)}{\zeta\cZ_\beta(0)} - \frac 1 \zeta - \vev{\cO}_\beta \\ 
& \leq (e^{2\delta}-1) \frac{\cZ_\beta(\zeta)}{\zeta\cZ_\beta(0)}
+ \zeta \|\cO\|^2  e^{\zeta \|\cO\|}\\
& \leq (3\delta/\zeta + \zeta \|\cO\|^2)  e^{\zeta \|\cO\|}\\
& \leq 2\sqrt{\delta} \|\cO\|
\ea
In the last step we have
used the assumption that $\delta \leq 1/21$ to simplify the numerical constants.
A similar argument shows that
$\widehat{\vev{\cO}_\beta} - {\vev{\cO}_\beta} \geq -2\sqrt\delta \|\cO\|$, thus completing the proof.
\end{proof}
\noindent  By \eq{finiteDiff}, \eq{part-approx}, and the triangle inequality, the expectation $\langle \cO\rangle_\beta$ can be  approximated by
\begin{equation}
\langle O \rangle_{\beta, L} = \frac{1}{Z_{\beta,L(0)}}\frac{\partial}{\partial \zeta} Z_{\beta,L}(0), \label{eq:st-observable}
\end{equation}
where $Z_{\beta,L}(\zeta)$ is a Suzuki-Trotter approximation to \eq{Z-zeta}.  If $\cO$ is diagonal use the form in \secref{qtc} with $A \rightarrow A + \zeta \cO$, 
\be
 Z_{\beta,L}(\zeta)  = \sum_{\mathbf{z}\in \Omega} e^{\frac{1}{L} \sum_{i=1}^{L} \L(A(z_i) + \zeta \cO(z_i)\R)}\prod_{i = 1}^{L}\langle z_i |e^{\frac{\cB_i}{L}}|z_{i+1}\rangle.
\ee
Evaluating the derivative at $\zeta = 0$,  
\begin{align}
\frac{1}{Z_{\beta,L}(0)} \frac{\partial}{\partial\zeta} Z_{\beta, L}(0) &= \frac{1}{Z_{\beta,L}} \sum_{\mathbf{z}\in \Omega}\L (\frac{1}{L}\sum_{i=1}^{L}\cO(z_i)\R)e^{\frac{1}{L} \sum_{i=1}^{L} A(z_i)} \prod_{i = 1}^{L}\langle z_i |e^{\frac{\cB_i}{L}}|z_{i+1}\rangle\\
&= \sum_{\mathbf{z}\in \Omega}\pi(\mathbf{z})\L(\frac{1}{L}\sum_{i=1}^{L}\cO(z_i) \R) = \frac{1}{L}\sum_{i = 1}^{L} \langle O(z_i) \rangle_{\pi} =\langle \cO(z_1) \rangle_{\pi}
\end{align}
where the final expression follows by the symmetry of $\pi$ under cyclic permutations of the time slices.  For off-diagonal $\cO$, 
\be
 Z_{\beta,L}(\zeta)  = \sum_{\mathbf{z}\in \Omega} e^{\frac{1}{L} \sum_{i=1}^{L} A(z_i) } \prod_{i = 1}^{L}\langle z_i |e^{\frac{1}{L}\L(\cB_i + \zeta O/2\R)}|z_{i+1}\rangle.
\ee
Evaluating the derivative at $\zeta = 0$,  
\ba
\langle O \rangle_{\beta,L} &= \frac{1}{Z_{\beta,L}(0)} \sum_{\mathbf{z} \in \Omega} e^{\frac{1}{L}\sum_{i =1}^{L} A(z_i)} \sum_{k=1}^{L} \langle z_k|\L(\frac{O}{2L}\R) e^{\frac{\cB_k}{L}}|z_{k+1}\rangle \prod_{\substack{i = 1\\i \neq k}}^{L} \langle z_i | e^{\frac{\cB_i}{L}} | z_{i+1} \rangle  \\ 
&= \sum_{\mathbf{z} \in \Omega}\L( \frac{1}{2L} \sum_{k=1}^{L} \frac{\langle z_k|O e^{\frac{\cB_k}{L}}|z_{k+1}\rangle}{\langle z_k|e^{\frac{\cB_k}{L}}|z_{k+1}\rangle } \R) \pi(\mathbf{z}) \\
&= \frac{1}{2L} \sum_{k=1}^{L} \L\langle \frac{\langle z_k|O e^{\frac{\cB_k}{L}}|z_{k+1}\rangle}{\langle z_k|e^{\frac{\cB_k}{L}}|z_{k+1}\rangle }\R\rangle_{\pi}\\
&=\frac{1}{2}\sum_{\mathbf{z}\in \Omega} \L [ \frac{\langle z_1 | \cO e^{\frac{B_1}{L}} | z_2 \rangle}{\langle z_1 |e^{\frac{B_1}{L}} | z_2 \rangle} + \frac{\langle z_2 | \cO e^{\frac{B_2}{L}} | z_3 \rangle}{\langle z_2 |e^{\frac{B_2}{L}} | z_3 \rangle} \R ] \pi(\mathbf{z})\label{eq:observableCompute}
\ea 
where the last step follows by symmetry of $\pi$ under even cyclic permutations of the time slices.  Since the summand of \eq{observableCompute} is efficiently computable and has bounded magnitude we can estimate it using a standard Monte Carlo procedure (see lemma~\ref{lem:hoeff}) once one has an efficient meanns of obtaining independent samples from $\pi$.

\subsection{Approximating partition functions}\label{sec:estimatePartitionFunction}
In Section~\ref{sec:proof} we will prove that for any tolerance $\delta > 0$ the PIMC method can efficiently output an element of
$\Omega$ sampled from a distribution $\tilde{\pi}$ which is within
$|\tilde{\pi} - \pi| \leq \delta$.  Here we explain how these samples can be used to approximate the partition function with an application of the standard telescoping product technique \cite{jerrum-1993, vigoda-2005}\footnote{{The techniques used in this section to estimate the partition function (or equivalently the free energy) are necessary because the entropy term in the free energy cannot be directly estimated as an observable.  If one is only interested in the estimation of specific observables then it suffices to use the end result\eqref{eq:observableCompute}}.}.  We express the partition function $Z(\beta)$ as a product of terms involving the partition function at higher temperatures, 

\begin{equation}
Z(\beta) = Z(0) \frac{Z(\beta_1)}{Z(\beta_0)} \ldots \frac{Z(\beta_k)}{Z(\beta_{k-1})}
\end{equation} 
where the temperature schedule $\beta_0 < \beta_1 <\ldots<\beta_k$ with $\beta_0 = 0$ and $\beta_k = \beta$ is chosen so that each ratio in the telescoping product is bounded.  Defining $w_{\beta_i}(\mathbf{z}) : = Z(\beta_i) \pi_{\beta_i}(\mathbf{z})$, note that
\begin{equation}
\left \langle \frac{w_{\beta_i}}{w_{\beta_{i-1}}} \right \rangle_{\pi_{\beta_{i-1}}} = \sum_{\mathbf{z}\in \Omega} \frac{w_{\beta_i}(\mathbf{z})}{w_{\beta_{i-1}}(\mathbf{z})}\pi_{\beta_{i-1}}(\mathbf{z}) = \frac{Z(\beta_i)}{Z(\beta_{i-1})}.
\end{equation}
At infinite temperature the quantum partition function is $\mathcal{Z}(0) = 2^n$.  Further, a uniform schedule with $k = \mathcal{O}(\beta \|H\| \log(\beta \|H\|))$ ensures that
\begin{equation}
\frac{1}{e} \leq  \frac{\mathcal{Z}(\beta_i)}{\mathcal{Z}(\beta_{i-1})} \leq 1.
\end{equation}
Note that in infinite temperature limit $\beta = 0$ the classical effective systems \eqref{eq:timenergy} and \eqref{eq:generalenergy} have infinite coupling strength between spins in the imaginary-time direction, and so the Markov chain we analyze in \secref{proof} will not be ergodic.  This is ununsual from the perspective of classical spin systems, but it happens in this case because of the temperature dependence of the couplings.  To deal with this the first expectation $\left \langle w_{\beta_1}/w_{\beta_{0}} \right \rangle_{\pi_{\beta_{0}}}$ in the telescoping product can be computed using the fact that $\pi_{\beta_0}$ is equivalent to the uniform distribution on the subset $\Omega_{\textnormal{classical}} \subseteq \Omega$ consisting of configurations $(z_1,\ldots,z_L)$ which have $z_i = z_j$ for all $i,j = 1,\ldots,L$.  

Finally, in order to bound the number of samples needed to compute the expectation values in the telescoping product we will make use of the Hoeffding bound.
\begin{lemma}[\cite{Hoeffding-1963}]\label{lem:hoeff}
Let $X_1,\ldots,X_t$ be independent random variables satisfying $|X_i|
\leq 1$ and $\bbE[X_i] = \bar X$.  Then 
\be \bbP[\L| \frac{1}{t} \sum_{i=1}^t X_i
  - \bar X \R| \geq \delta] \leq 2 e^{-t\delta^2/2}\ee
\end{lemma}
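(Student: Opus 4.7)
The plan is to apply the standard Chernoff--Hoeffding technique. First I would split the two-sided deviation event using a union bound,
\[ \bbP\!\left[\left|\tfrac{1}{t}\sum_{i=1}^t X_i - \bar X\right| \geq \delta\right] \leq \bbP\!\left[\sum_{i=1}^t (X_i-\bar X) \geq t\delta\right] + \bbP\!\left[\sum_{i=1}^t (X_i-\bar X) \leq -t\delta\right], \]
which accounts for the factor of $2$ in the final bound. By replacing $X_i$ with $-X_i$ the two tails are handled identically, so it suffices to bound the upper tail.

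For the upper tail, I would apply Markov's inequality to the exponentiated sum: for any parameter $\lambda > 0$,
\[ \bbP\!\left[\sum_{i=1}^t (X_i-\bar X) \geq t\delta\right] \leq e^{-\lambda t\delta}\,\bbE\!\left[\exp\!\left(\lambda \sum_{i=1}^t (X_i-\bar X)\right)\right] = e^{-\lambda t\delta}\prod_{i=1}^t \bbE\!\left[e^{\lambda(X_i-\bar X)}\right], \]
where the factorization uses independence.

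The key ingredient is Hoeffding's lemma: if $Y$ is a centered random variable supported in $[a,b]$, then $\bbE[e^{\lambda Y}] \leq \exp(\lambda^2(b-a)^2/8)$. I would prove this by writing $e^{\lambda y}$ as a convex combination on $[a,b]$, so that $e^{\lambda y} \leq \tfrac{b-y}{b-a}e^{\lambda a} + \tfrac{y-a}{b-a}e^{\lambda b}$, taking expectations (using $\bbE[Y]=0$), and bounding the resulting function of $\lambda$ by a Taylor expansion of its logarithm. Applied to $Y_i = X_i - \bar X$, which lies in $[-1-\bar X,\,1-\bar X]$, an interval of length $2$, this yields $\bbE[e^{\lambda Y_i}] \leq e^{\lambda^2/2}$.

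Combining these steps,
\[ \bbP\!\left[\sum_{i=1}^t (X_i-\bar X) \geq t\delta\right] \leq \exp\!\left(-\lambda t\delta + \tfrac{t\lambda^2}{2}\right), \]
and I would optimize by choosing $\lambda = \delta$, which produces the exponent $-t\delta^2/2$. Doubling to account for the lower tail gives the claimed bound. The only non-routine step is Hoeffding's lemma itself; everything else is Chernoff bookkeeping.
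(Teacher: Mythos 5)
Your proof is correct: the union bound over the two tails, Markov's inequality applied to the exponentiated sum, Hoeffding's lemma giving $\bbE[e^{\lambda(X_i-\bar X)}]\leq e^{\lambda^2/2}$ for variables in an interval of length $2$, and the choice $\lambda=\delta$ together yield exactly the stated exponent $-t\delta^2/2$. The paper does not prove this lemma at all but simply cites Hoeffding (1963), and your argument is precisely the standard derivation from that reference, so there is nothing to compare beyond noting agreement.
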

Define $M(z) = e^{-2 \delta} w_{\beta_i}(z)/w_{\beta_{i+1}}(z)$, so that $|M(z)| \leq 1$. Observe that $|\langle M \rangle_{\pi} - \langle M \rangle_{\tilde\pi}| \leq \|\pi - \tilde\pi\|_1 \|M\|
\leq \delta$. We will estimate $\langle M \rangle_{\tilde\pi}$ by drawing $t$ samples from $\tilde \pi$,
which we call $\bz^{(1)},\ldots,\bz^{(t)}$.  Our estimator will simply
be the sample mean, namely
\be \hat{\bar M} := \frac{1}{t} \sum_{i=1}^t M(\bz_i).\ee
By \lemref{hoeff} we have 
\be\bbP[|\hat{\bar M} -\langle M \rangle_{\tilde\pi}| \geq \delta] \leq
2e^{-t\delta^2/2}. \label{eq:highprob}\ee
We conclude that taking $t = O(1/\delta^2)$ samples yields an $O(\delta)$
approximation to $\langle M \rangle_{\tilde\pi}$ with high probability.   Therefore a polynomial number of samples will suffice to estimate each term of the telescoping product.

\subsection{Restricting the PIMC state space}\label{sec:restricting}

In this section we justify restricting PIMC to a subset of the state space that limits the number of jumps in any particular worldline to be $\cO(\beta \log n)$ (this restricted state space will be used for the mixing time analysis in \secref{general1D}).  For a configuration $\mathbf{z} = (z_1,...,z_L)$ the number of jumps in the worldline of the $j$-th qubit can be expressed in terms of the Pauli operator $X_j$,
$$
d_j(\mathbf{z}) := |\{i : z_{i,j} \neq z_{i+1,j}\}| =  \sum_{i = 1}^L \langle z_i | X_j | z_{i+1}\rangle.
$$
The strategy is to bound the $k$-th moment of the random variable $d_j$ with respect to the QMC stationary distribution $\pi$ and use the fact that
$$
\Pr[d_j \geq a]_{\pi} \leq \frac{\mathbbm{E}_{\pi} d^k_j}{a^k}.
$$
These moments satisfy
\begin{align}
\mathbbm{E}_{\pi} d^k_j&:= \sum_{\mathbf{z} \in \Omega} d_j^k(\mathbf{z}) \pi(\mathbf{z})\nonumber \\
&= \sum_{\mathbf{z} \in \Omega} \left(\sum_{i = 1}^L \langle z_i | X_j | z_{i+1}\rangle \right)^k \pi(\mathbf{z})\nonumber  \\
&= \sum_{\mathbf{z} \in \Omega} \sum_{i_1,...,i_k = 1}^L  \prod_{r = 1}^k \langle z_{i_r} | X_j | z_{i_r+1}\rangle  \pi(\mathbf{z})\nonumber  \\
&= \frac{1}{Z_{\beta,L}}\sum_{\mathbf{z} \in \Omega} \sum_{i_1,...,i_k = 1}^L  \prod_{r = 1}^k \langle z_{i_r} | X_j | z_{i_r+1}\rangle  \left(e^{\sum_{i=1}^{L} \frac{A(z_i)}{L} }\prod_{i = 1}^{L}\langle z_i |e^{\frac{B_i}{L}}|z_{i+1} \rangle \rangle\right).\label{eq:momentSum}
\end{align}
where $A,  B_i$ are defined in \secref{qtc}.  To merge the two products in the expression above into a simpler trace expression, we want to show the following inequality for any neighboring time slices $z_i$ and $z_{i+1}$, 
\begin{equation}\label{eq:elementwiseInequality}
\langle z_i| X_j |z_{i+1}\rangle \leq 4\frac{\beta}{L} \frac{\langle z_i | X_j e^{B_i/ L} | z_{i+1}\rangle}{\langle z_i | e^{ B_i / L} | z_{i+1}\rangle} ,
\end{equation}
where we assume the local terms are normalized so that for some $\Gamma > 0$
$$\Gamma \leq |\langle z_i | H | z_{i+1}\rangle| \leq 1 \quad \forall z_i,z_{i+1} :
|z_i - z_{i+1}| = 1.$$  If $\langle z_i | X_j | z_{i+1}\rangle = 0$ then \eqref{eq:elementwiseInequality} is satisfied.   Otherwise if $\langle z_i | X_j | z_{i+1}\rangle = 1$ then the numerator on the RHS of \eqref{eq:elementwiseInequality} satisfies
\be
\langle z_{i} | X_j e^{ B_i/ L} | z_{i+1}\rangle 
= \langle z_{i+1} | e^{B_i/ L} | z_{i+1}\rangle 
\geq  e^{-\beta\|H\|/L} \geq \frac 12.
\ee
To upper bound the denominator of the RHS of \eqref{eq:elementwiseInequality} , the fact that $B_i$ is a commuting Hamiltonian implies
\begin{align*}
\langle z_i | e^{B_i/L} |z_{i+1}\rangle &=  \langle z_{i,j} z_{i,j+1}| e^{B_{i,j} /L} |z_{i+1,j},z_{i+1,j+1}\rangle \\
&=\sum_{q = 1}^\infty
\frac{1}{q!} \left\langle z_{i,j} z_{i,j+1}|\left(B_{i,j}/L \right)^q |z_{i+1,j},z_{i+1,j+1} \right\rangle\\
&\leq e^{\frac{\beta \|H_j\|}{L}} - 1\\
&\leq \frac{1}{\ln(2)}\frac{\beta }{L},
\end{align*}
where in the last step we have used $\|H_j\| \leq 1$, ${\frac{\beta}{L}}\leq \ln(2)$
and the convexity of $e^x$.  Using $\ln(2)/2\geq 1/4$ we have established \eqref{eq:elementwiseInequality}.  

To apply this result to \eqref{eq:momentSum} we will use the
permutation symmetry of the $i_1,...,i_k$ variables to restrict to a
sum over ordered $k$-tuples $(i_1,...,i_k)$ which satisfy $i_1 \leq i_2\leq \ldots i_k$.
\begin{align}
\mathbbm{E}_{\pi} d^k_j& \leq \frac{k!}{Z}\sum_{\mathbf{z} \in \Omega} \sum_{1\leq  i_1 \leq ... \leq i_k \leq L}  \prod_{r = 1}^k \langle z_{i_r} | X_j | z_{i_r+1}\rangle \left(e^{\sum_{i=1}^{L} \frac{A(z_i)}{L} }\prod_{i = 1}^{L}\langle z_i |e^{\frac{B_i}{L}}|z_{i+1} \rangle \rangle\right)\\
&\leq  \frac{4^k k! \beta^k}{Z L^k}\sum_{\mathbf{z} \in \Omega} \sum_{1\leq  i_1 \leq ... \leq i_k \leq L}  \prod_{r = 1}^k \frac{\langle z_{i_r} | X_j e^{\frac{B_i}{L}} | z_{i_r+1}\rangle}{\langle z_{i_r} |e^{\frac{B_i}{L}} | z_{i_r+1}\rangle}  \left(e^{\sum_{i=1}^{L} \frac{A(z_i)}{L} }\prod_{i = 1}^{L}\langle z_i |e^{\frac{B_i}{L}}|z_{i+1} \rangle \rangle\right)
\end{align}
Defining $t_r = (i_r - i_{r-1})\beta /L$ (as well as $t_1 = r_1\beta/ L, t_k = (L - i_k)\beta/L$) this expression can be seen as the trace of a product of operators,
\begin{equation}
\mathbbm{E}_{\pi} d^k_j \leq \frac{k! (4\beta)^k}{Z L^k} \sum_{\substack{t_1,...,t_k\\ t_1 + ... + t_k = \beta}}^L  \textrm{tr} \left[M_{t_1}R_{t_1} X_j L_{t_2} M_{t_2} R_{t_2} X_j ... X_j  L_{t_k} M_{t_k} \right]\label{eq:traces}
\end{equation}
where 
$$
M_{t_i} :=  \left( e^{B_{\textrm{even}}/L} e^{A/L} e^{B_{\textrm{odd}}/L} e^{A/L} e^{B_{\textrm{even}}/L}\right)^{\lfloor t_i \beta /L \rfloor -2}
$$  
and $L_i$,$R_{t_i}$ can each denote one of several products of operators depending on $t_i$, which is determined by where the $X_j$ operators disrupt the sequence, 
$$
L_{t_i} =\begin{cases} 
      e^{A/L}  \\
      e^{A/L} e^{B_{\textrm{odd}}/L} \\
      e^{A/L} e^{B_{\textrm{odd }}/L}e^{A/L} \\
     e^{A/L} e^{B_{\textrm{odd}}/L} e^{A/L} e^{B_{\textrm{even}}/L}   
   \end{cases}  \quad , \quad 
   R_{t_i} =    \begin{cases} 
      e^{B_{\textrm{even}}/L}  \\
      e^{A/L} e^{B_{\textrm{even}}/L} \\
      e^{B_{\textrm{odd }}/L}e^{A/L} e^{B_{\textrm{even}}/L} \\
     e^{A/L} e^{B_{\textrm{odd}}/L} e^{A/L} e^{B_{\textrm{even}}/L}   .
   \end{cases}
$$
The purpose of this decomposition is that each $M_{t_i}$ is a PSD operator raised to an positive integer power, and the rest of the Matrices $L_{t_i},R_{t_i},X_j$ are all either PSD operators or are explicit products of 2,3, or 4 PSD operators.  
Now the traces in \eqref{eq:traces} can be bounded using classic inequality due to Ky Fan~\cite[Theorem 1]{Fan51}.  Let $P_1 ... P_N$ satisfy $P_i \succeq 0$ for each $i$, and let $\sigma(P_i)$ be a diagonal matrix with the eigenvalues of $P_i$ along the diagonal in descending order.  Then
$$
\textrm{tr} \left[ P_1 ... P_N \right] \leq \textrm{tr} \left[\sigma(P_1)...\sigma(P_N)  \right]
$$
For our purposes define the notation $\sigma(R_{t_i})$ to be the product of diagonal matrices of eigenvalues of the operators that make up $R_{t_i}$ e.g. if $R_{t_i} = e^{A/L} e^{B_{\textrm{even}}/L}$ then $\sigma(R_{t_i}) = \sigma \left(e^{A/L}\right) \sigma\left( e^{B_{\textrm{even}}/L} \right)$.  Furthermore we may take $B_{\textrm{even}}, B_{\textrm{odd}} \succeq 0$ without loss of generality (by shifting the overall Hamiltonian by a multiple of the identity), so that $\mathbbm{1} \succeq \sigma(R_{t_i}), \sigma(L_{t_i})$ for each $t_i$.  Therefore each trace in \eqref{eq:traces} satisfies
\begin{align}
&\textrm{tr} \left[M_{t_1}R_{t_1} X_j L_{t_2} M_{t_2} R_{t_2} X_j ... X_j  L_{t_k} M_{t_k} \right] \nonumber \\ 
&\leq \textrm{tr} \left[\sigma \left(M_{t_1} \right ) \sigma \left(R_{t_1} \right)\sigma\left(X_j \right) \sigma\left( L_{t_2} \right) \sigma\left( M_{t_2} \right) \sigma \left(R_{t_2}\right) \sigma \left( X_j \right) ...\sigma\left( X_j \right) \sigma\left( L_{t_k} \right) \sigma \left( M_{t_k} \right) \right]\label{eq:sigmaRL1}\\
&\leq \textrm{tr} \left[\sigma \left(M_{t_1} \right ) \sigma\left( M_{t_2} \right)  ... \sigma \left( M_{t_k} \right) \right]
\label{eq:sigmaRL2}\\
&= \textrm{tr} \left[\left( e^{B_{\textrm{even}}/L} e^{A/L} e^{B_{\textrm{odd}}/L} e^{A/L} e^{B_{\textrm{even}}/L}\right)^{L -2k} \right]  \nonumber \\
& = Z_{\beta, L - 2k} \nonumber
\end{align}
where the fact that $\mathbbm{1} \succeq \sigma(R_{t_i}), \sigma(L_{t_i})$ is used to go from \eqref{eq:sigmaRL1} to \eqref{eq:sigmaRL2}.  Now for $k \ll L$ we have $Z_{\beta,L} \approx Z_{\beta, L - 2k}$, so this  implies that $\mathbbm{E}_{\pi} d^k_j \leq k! \beta^k$.   Therefore taking $k=c\log n$ for some $c > 0$ and using $k!\leq (k/e)^k$ yields
$$\Pr_{\pi}[d_j \geq c\beta \log n] \leq \frac{k!\beta^k}{c^k\beta^k\log^k b} = \frac{k!}{k^k} \leq e^{-k} = n^{-c}.$$

\section{Proofs of rapid mixing}\label{sec:proof}
{Having established the correctness of the PIMC method in section \ref{sec:PIMCcorrect} we now procede to analyze the mixing time of the associated Markov chains.  The mixing time analysis is divided into cases.  In section \ref{sec:markovchains} we review the main techniques that are common to each case in our analysis.  Section \ref{sec:transverseIsing} contains the mixing analysis for transverse Ising models, while sections \ref{sec:XX} and \ref{sec:general1D} analyze Hamiltonians of the form \eqref{eq:xxham} and \eqref{eq:Ham} , respectively.}
\subsection{Markov chains, mixing times, and canonical paths}\label{sec:markovchains}

Let $P$ be the transition matrix of a reversible Markov chain defined on a state space $\Omega$ with stationary distribution $\pi$.  Let $\mathbf{z}$ be the state of the chain at time $t = 0$, and define $P^t(\mathbf{z},\cdot)$ to be the distribution of the state of the chain  at time t.  The variation distance of the chain at time $t$ starting from the initial state $\mathbf{z}$ at time $t = 0$ is
\begin{equation}
d_{\mathbf{z}}(t) := \max_{A \subseteq \Omega} |P^t(\mathbf{z},A) - \pi(A)| = \frac{1}{2} \sum_{\mathbf{z'}\in \Omega}|P^t(\mathbf{z},\mathbf{z'})-\pi(\mathbf{z'})|
\end{equation}
Define $\tau(\epsilon)$ to be the worst-case time needed to be within variation distance $\epsilon$ of the stationary distribution,
\begin{equation}
\tau(\epsilon) := \max_{\mathbf{z}\in \Omega} \min_t \{t: d_{\mathbf{z}}(t') \leq \epsilon \; \; \forall t \geq t'\}.
\end{equation}
Define the mixing time to be $\tau_{\textrm{mix}} := \tau(1/4)$, and note that $\tau(\epsilon) = \tau_{\textrm{mix}} \log \epsilon^{-1}$~\cite{peres-2008}.  

To bound the mixing time of $P$ we use the method of canonical paths~\cite{jerrum-1996}.  A path from $x\in \Omega$ to $y \in \Omega$ is a sequence $\gamma_{x,y} = (v_1,\ldots,v_k)$ of states with $v_1 = x, v_k = y$ and $P(v_i,v_i+1) > 0$ for $i = 1,\ldots,k-1$.  A set $\mathcal{P} = \{\gamma_{x,y}\}$ containing a path that connects every pair of states is called a set of canonical paths.  If $\ell = \max_{\gamma \in \mathcal{P}} |\gamma|$ is the maximum length of any path in $\mathcal{P}$ and $\pi_{\min} := \min_{\mathbf{z} \in \Omega} \pi(\mathbf{z})$, then the mixing time $\tau_{\textnormal{mix}}$ is $\mathcal{O}(R \; \ell \ln \pi_{\min}^{-1})$, where the congestion $R$ is defined as  

\be
R := \max_{\left(v,v'\right) \in E(\Omega)} \;  \frac{1}{\pi(v)P(v,v')} \sum_{\substack{\gamma_{x,y}\in \mathcal{P} \\ (v,v') \in \gamma_{x,y}}} \pi(x)\pi(y)\label{eq:congestionDef}
\ee

In order to compute the congestion we will use a standard technique called encoding.  For an edge $e = \left(v,v'\right)$, let $\mathcal{P}(e)$ be the set of paths in $\mathcal{P}$ which pass through $e$.  An encoding is an assignment of an injective function $\eta_e:\mathcal{P}(e)\rightarrow \Omega$ to every edge $e \in E(\Omega)$.  More generally, it can be useful to consider injective encodings $\eta_e:\mathcal{P}(e)\rightarrow \Omega \times \Theta$ that map at most $G = |\Theta|$ paths through the $e$ to each state in $\Omega$~\cite{jerrum-1993}.  Let $\eta_e(\gamma_{x,y}) = (\eta_e(x,y),\theta_e(x,y))$, and suppose there is some $M$ such that
\begin{equation}
\pi(x)\pi\left(y\right) \leq M \pi\left(v\right) \pi\left(\eta_e\left(x,y\right)\right)  \; \; \; \forall x,y \in \Omega, \label{eq:simweight}
\end{equation}
then the congestion satisfies
\begin{eqnarray}
R &=& \label{eq:rho1} \max_{\left(v,v'\right) \in E(\Omega)} \;  \frac{1}{\pi(v)P(v,v')} \sum_{\substack{\gamma_{x,y}\in \mathcal{P} \\ (v,v') \in \gamma_{x,y}}} \pi(x)\pi(y)\\
&\leq&  \label{eq:applysimweight} \max_{\left(v,v'\right) \in E(\Omega)} \;  \frac{M\cdot G}{\pi(v)P(v,v')} \sum_{\substack{\gamma_{x,y}\in \mathcal{P} \\ (v,v') \in \gamma_{x,y}}} \pi(v) \pi(\eta_e(x,y)) \\
&\leq& \label{eq:inject}  M\cdot G \cdot P_{\min}^{-1}
\end{eqnarray}
where $P_{\min}:= \min_{(\mathbf{z},\mathbf{z}') \in E(\Omega)} P(\mathbf{z},\mathbf{z}')$, \eqref{eq:applysimweight} follows from property \eqref{eq:simweight}, and \eqref{eq:inject} follows from the injectivity property of the encoding.

\subsection{PIMC transition probabilities} \label{sec:PIMC-TP}
To sample from the classical Gibbs distribution \eqref{eq:generalQtCpartitionFunction} the PIMC method uses a Markov chain which chooses a site in the classical lattice $\Lambda$ uniformly at random and proposes to flip the bit at that site to make a transition $\mathbf{z} \rightarrow \mathbf{z'}$ with an acceptance probability $P(\mathbf{z},\mathbf{z'})$ given by the Metropolis rule~\cite{peres-2008},
\begin{equation}
P(\mathbf{z},\mathbf{z'}) :=\begin{cases} 
      \frac{1}{2 n L}\;\min \left \{1, \frac{\pi(\mathbf{z'})}{\pi(\mathbf{z})}\right\}, & \mathbf{z} \neq \mathbf{z'}\\
       1 - \sum_{\mathbf{z''} \neq z} P(\mathbf{z},\mathbf{z''}), & \mathbf{z} = \mathbf{z'}
   \end{cases} \label{metropolis}.
\end{equation}
As long as the state space is a connected graph, the transition matrix $P$ with transitions satisfying \eqref{metropolis} will converge to the stationary distribution $\pi$.

For the sake of lower bounding the minimum transition probability $P_{\textrm{min}}$ from the previous section it therefore suffices to lower bound $\pi(\mathbf{z'})/\pi(\mathbf{z})$ where $\mathbf{z}$ and $\mathbf{z'}$ differ only on a single site.  Suppose $\mathbf{z}$ differs from $\mathbf{z'}$ at position $(i,j)$, then after canceling common factors in \eqref{eq:generalQtCpartitionFunction} we have
\begin{equation}
\frac{\pi(\mathbf{z'})}{\pi(\mathbf{z})} = e^{\frac{1}{L}\L[A(z'_i) - A(z_i)\R]}  \frac{ \langle z_{i-1} | e^{\frac{B_{i-1}}{L}} | z'_i \rangle \langle z'_i | e^{\frac{B_i}{L}} | z_{i+1} \rangle}{ \langle z_{i-1} |e^{\frac{B_{i-1}}{L}} | z_i \rangle \langle z_i | e^{\frac{B_i}{L}} | z_{i+1} \rangle}.\label{eq:ratioAC}
\end{equation}
Depending on the off-diagonal matrix elements in $B_i$, the numerator of \eq{ratioAC} can potentially be zero.  This means that the state space $\Omega$ may not be connected by single bit flips, because the stationary distribution does not have support on all of $\Omega$.  This can be avoided in general by adding a 1-local transverse field field with sufficiently small magnitude $\Gamma$ to avoid disturbing the resulting approximation.   With this we can lower bound the ratio \eq{ratioAC} using the fact that $B_i$ is a nonnegative matrix, which implies
\be
\langle z | \left (I + \frac{B_i}{L}\right) | z'\rangle \leq \langle z | e^{\frac{B_i}{L}} | z'\rangle \quad \textrm{for all} \quad z,z' \in \{-1,1\}^n. \label{eq:elementnonneg}
\ee
The denominator of \eqref{eq:ratioAC} is upper bounded by 1, and \eq{elementnonneg} implies that the numerator is at least $(\beta \Gamma / L)^2$, therefore
\begin{equation}
\frac{\pi(\mathbf{z'})}{\pi(\mathbf{z})}  \geq  e^{-\frac{4\beta}{L}} \L(\frac{\beta \Gamma}{L}\R)^2 . \label{eq:transitionLB}
\end{equation}
\subsection{Transverse Ising models}\label{sec:transverseIsing}
In this section we will describe a set of canonical paths which will be used to show rapid mixing of PIMC applied to Hamiltonians of the form \eqref{eq:tim}.  Let $\mathbf{x},\mathbf{y} \in \Omega$ be spin configurations with $\mathbf{x} = (x_{1,1},\ldots,x_{L,1},\ldots,x_{1,n}\ldots,x_{L,n})$ and $\mathbf{y} = (y_{1,1},\ldots,y_{L,1},\ldots,y_{1,n}\ldots,y_{L,n})$.  

It will be convenient to choose an ordering on the space $[L]\times [n]$.  We order bits first by the second coordinate and then the first, so that $(r,l)\preceq (s,m)$ if $l\leq m$ or if $r\leq s$ and $l=m$.
For any $1\leq r \leq L$ and $1 \leq m \leq n$ define
\begin{eqnarray}
[\leq (r,m)] &:=&  \left(\{1,\ldots,L\}\times \{1,\ldots,m-1\}\right) \cup \left(\{1,\ldots,r\}\times \{m\}\right)\\
{} [>(r,m)] &:=&   \left(\{r+1,\ldots,L\}\times \{m\}\right) \cup \left(\{1,\ldots,L\}\times \{m+1,\ldots,n\}\right).
\end{eqnarray}

The canonical path $\gamma_{\mathbf{x},\mathbf{y}}$ from $\mathbf{x}$ to $\mathbf{y}$ consists of simply replacing each bit of $\mathbf{x}$ with the corresponding bit of $\mathbf{y}$ following the above ordering.  This can be thought of as  $n$ steps, each of which consists of $L$ substeps that correspond to transitions of the Markov chain.  
The $r$-th substep of the $m$-th step consists of flipping the bit in the $r$-th Trotter slice of the $m$-th qubit if $x_{r,m}\neq y_{r,m}$ or leaving it alone if $x_{r,m}=y_{r,m}$.  In the former case, this contributes an edge $(\mathbf{v},\mathbf{v'})\in E(\Omega)$ with 
\begin{eqnarray}
\mathbf{v} \;&=&(y_{1,1},\ldots,y_{r - 1,m} , x_{r, m},x_{r+1,m},\ldots,x_{L,n})\\
\mathbf{v'}&=& (y_{1,1},\ldots,y_{r-1,m} , y_{r,m},x_{r+1,m}\ldots,x_{L,n}).
\end{eqnarray}
In the latter case (i.e.~$x_{r,m}=y_{r,m}$), this step does not add an edge to the path.  Thus the number of edges in $\gamma_{\mathbf{x},\mathbf{y}}$ equals the number of positions on which $\mathbf{x}$ and $\mathbf{y}$ disagree, which is always $\leq nL$.
In \secref{markovchains} we defined an {\em encoding} of a path to be an injective map $\eta_e$ from the set of paths through $e$ to the state space $\Omega$.  The encoding we'll use for this path is 
\begin{equation}
\eta_{(\mathbf{v},\mathbf{v'})}(\gamma_{\mathbf{x},\mathbf{y}})  := \mathbf{x}\cdot\mathbf{y}\cdot\mathbf{v},
\end{equation}
where  $\cdot$ to denote elementwise multiplication.
To see that $\mathbf{x},\mathbf{y}$ are uniquely determined by 
$\mathbf{v}$ together with $\boldsymbol{\eta} =\eta_{ (\mathbf{v},\mathbf{v'})}(\gamma_{\mathbf{x},\mathbf{y}})$, observe that $\mathbf{v}\cdot \boldsymbol{\eta} = \mathbf{x}\cdot\mathbf{y}$, which specifies the bits that are flipped along the path $\gamma_{\bx,\by}$. Moreover, the edge $(\bv,\bv')$ specifies the location $(r,m)$ of the bit being flipped in the transition from $\bv$ to $\bv'$.  
Let $(\bv\cdot\bfeta)_{\leq(r,m)}$ denote the vector which equals $\bv\cdot\bfeta=\bx\cdot\by$ on the coordinates in $[\leq (r,m)]$ and equals 1 on the coordinates in $[>(r,m)]$; define $(\bv\cdot\bfeta)_{>(r,m)}$ analogously.  Then we can recover $\bx$ by calculating $\bv'\cdot (\bv\cdot\bfeta)_{\leq(r,m)}$, and can recover $\by$ by calculating $\bv'\cdot  (\bv\cdot\bfeta)_{>(r,m)}$.

For any subset of the lattice sites $A \subseteq \Lambda$, define the restriction $E_\beta^A$ of the classical energy function $E_\beta$ by restricting the sum in \eq{timenergy} to sites $(i,j) \in A$, so that for all $\mathbf{z} \in \Omega$ we have
\be
E_\beta^A(\mathbf{z}) = \sum_{(i,j)\in A} \left( \frac{K^z_j}{L} z_{i,j} - \beta^{-1}J_i z_{i,j}z_{i+1,j} + \sum_{k \in A}\frac{K^{zz}_{jk}}{L} z_{i,j} z_{i,k} \right).
\ee

Now we compute
\ba
E_\beta(\mathbf{v}) &= E^{\leq(r,m)}_\beta(\mathbf{y})  + E^{>(r,m)}_\beta(\mathbf{x}) + B\\
E_\beta(\boldsymbol{\eta}) &= E^{\leq(r,m)}_\beta(\mathbf{x})  + E^{>(r,m)}_\beta(\mathbf{y}) + B'
\ea
where 
\begin{align*}
B &:=\beta^{-1}J_m\left(y_{1,m} x_{L,m} +  y_{r,m} x_{r+1,m}\right) +\frac{1}{L}\sum_{i = 1}^r  K^z_m y_{i,m} +\frac{1}{L}\sum_{i = r}^L K^{z}_{m} x_{i,m} \\
&+\frac{1}{L}\sum_{i = 1}^r K^{zz}_{m,m+1} y_{i,m} x_{i,m+1} + 
\frac{1}{L}\sum_{i = r}^L  K^{zz}_{m-1,m} y_{i,m-1} x_{i,m} \\
&+ \frac{1}{L}\sum_{i = 1}^L\sum_{j = 1}^{m-1} \sum_{k = m}^{n} K^{zz}_{j k} y_j x_k  ,\\ 
B' &:= \beta^{-1}J_m\left(x_{1,m} x_{L,m} +  x_{r,m} x_{r+1,m}\right) +\frac{1}{L} \sum_{i = 1}^r K^z_m x_{i,m} + \frac{1}{L} \sum_{i = r}^L  K^z_m y_{i,m}\\
&+\frac{1}{L} \sum_{i = 1}^r K^{zz}_{m,m+1} x_{i,m} y_{i,m+1}  +\frac{1}{L} \sum_{i = r}^L K^{zz}_{m-1,m} x_{i,m-1} y_{i,m}  \\
&+ \frac{1}{L}\sum_{i = 1}^L\sum_{j = 1}^{m-1} \sum_{k = m}^{n} K^{zz}_{j k} x_j y_k.
\end{align*}
We also need
\ba
E_\beta(\mathbf{x}) &= E^{\leq(r,m)}_\beta(\mathbf{x})  + E^{>(r,m)}_\beta(\mathbf{x}) + B_X\\
E_\beta(\mathbf{y}) &= E^{\leq(r,m)}_\beta(\mathbf{y})  + E^{>(r,m)}_\beta(\mathbf{y}) + B_Y
\ea
where 
\begin{align*}
B_X &:= \sum_{i=1}^L \left( \frac{ K^z_m}{L} x_{i,m} + \beta^{-1} J_m x_{i,m}x_{i+1,m} \right)\\
&+ \frac{1}{L} \sum_{i = 1}^r K^{zz}_{m,m+1} x_{i,m} x_{i,m+1} +  \frac{1}{L} \sum_{i = r}^L  K^{zz}_{m-1,m}x_{i,m-1} x_{i,m}\\
&+ \frac{1}{L}\sum_{i = 1}^L\sum_{j = 1}^{m-1} \sum_{k = m}^{n} K^{zz}_{j k} x_j x_k,
\end{align*}
and $B_Y$ is defined similarly.  The value of $M$ needed to satisfy \eqref{eq:simweight} for this encoding can now be determined by
\be
\frac{\pi(\mathbf{x})\pi(\mathbf{y})}{\pi(\mathbf{v})\pi(\boldsymbol{\eta})} = e^{-\beta\left(E_\beta(\mathbf{x}) + E_\beta(\mathbf{y}) - E_\beta(\mathbf{v}) - E_\beta(\boldsymbol{\eta})\right)} = e^{\beta \left(B + B' - B_X - B_Y \right)}
\ee
To upper bound the exponent in the worst-case define  $J := \max_{m = 1,\ldots,n} J_m$, then
$$
\beta(B + B' - B_X - B_Y) \leq 8 J + 4 \beta + 4 \beta \sum_{j = 1}^{n/2} \sum_{k = \frac{n}{2} + 1}^{n} K^{zz}_{j k}.
$$
To upper bound the sum we use the assumption $K^{zz}_{j k} \leq |j - k|^{-(2 + \xi)}$ and regard it as a lower Riemann sum for an integral,  
$$
 \sum_{j = 1}^{n/2} \sum_{k = \frac{n}{2} + 1}^{n} |j - k|^{-(2 + \xi)} \leq \int_{j = 1}^{j = \frac{n}{2}} \int_{k = \frac{n}{2} + 1}^{k = n} dj dk \; |j - k|^{-(2 + \xi)} = \frac{1 -2^{\xi +1} n^{-\xi }+(n-1)^{-\xi }}{\xi ^2+\xi },
$$
and since $\xi>0$ it suffices in \eq{inject} to take $M = \mathcal{O}\left(e^{8 J + 4 \beta}\right) $.  If the 1D quantum system has periodic boundary conditions connecting qubit 1 to qubit $n$, then the same analysis would show that $M = \mathcal{O}\left(e^{8(J + \beta)}\right)$ suffices for \eqref{eq:simweight}.  Therefore by \eqref{eq:rho1}-\eqref{eq:inject} the congestion $R$ satisfies
\[
R \leq M \cdot P^{-1}_{\min} = \mathcal{O}\left( e^{8(\beta + J)}\cdot 2 n L \cdot e^{2 \left(J + \frac{\beta}{L}\right)} \cdot  \max_{(\mathbf{z},\mathbf{z'})} \left(\frac{\pi(\mathbf{z'}}{\pi(\mathbf{z})} \right) \right)
\]
the factor of $2 n L$ comes from the chain being lazy and the choice of random single-site updates.  Note that $\max_{(\mathbf{z},\mathbf{z'})} \left(\pi(\mathbf{z'}/\pi(\mathbf{z}) \right)$ is $\cO(L^2 \Gamma^{-2})$ from \eqref{eq:transitionLB} and also $\ell := \max_{\mathbf{x}, \mathbf{y}} |\gamma_{\mathbf{x},\mathbf{y}}| = n L$ and $\ln \pi_{\min}^{-1} = \ln (Z\cdot e^{n L J + n \beta}) \leq n L(J + \beta + \ln 2)$, using $Z \leq 2^{n L}$.   In \secref{qtc} we show that $L = \poly(n,\delta_{\textrm{mult}}^{-1})$ suffices to approximate the partition function with the desired precision.   Therefore the mixing time bounds in \secref{markovchains} imply that the PIMC method generates samples as was assumed in \secref{estimatePartitionFunction}, which completes the proof of theorem~\ref{theo:tim} for models of the form \eq{tim}.

\subsection{Polynomial time mixing in theorem 1} \label{sec:XX}

 In this section we will prove rapid mixing of PIMC for Hamiltonians of the form \eqref{eq:xxham} by describing paths between arbitrary states $\mathbf{x}, \mathbf{y} \in \Omega$.  It will be convenient to represent states in the form $\mathbf{z} = [\bar{z}_1,\ldots,\bar{z}_n]$ with $\bar{z}_j := (z_{1,j},\ldots,z_{L,j})$ denoting the spin values along the $j$-th worldline in configuration $\mathbf{z} \in \Omega$.  We say that a ``double jump'' occurs in configuration $\mathbf{z}$ at position $(i,j)$ if $\left(1+z_{i,j}z_{i+1,j}\right)\left(1+z_{i,j+1}z_{i+1,j+1} \right) \neq 0$.  For any consecutive pair of worldlines $j, j+1$, define the number of double jumps $d(\bar{z}_{j},\bar{z}_{j+1})$,
\be
d(\bar{z}_{j},\bar{z}_{j+1}):=\frac{1}{4}\sum_{i = 1}^L \left(1+z_{i,j}z_{i+1,j}\right)\left(1+z_{i,j+1}z_{i+1,j+1} \right).
\ee 
For any configuration $\mathbf{z}$ and worldline $j$ we will define a path segment to a new configuration $[\bar{z}_1,\ldots,\bar{z}_{j-2},\bar{z}'_{j-1},\bar{z}''_j,\bar{z}'_{j+1},\bar{z}_{j+2},\ldots,\bar{z}_n]$ with all of the double jumps removed from worldline $j$,
\be
d(\bar{z}'_{j-1},\bar{z}''_{j}) = d(\bar{z}''_{j},\bar{z}'_{j+1}) = 0 \label{eq:G2}
\ee
The double prime notation indicates that the $j$-th worldline does not contain any double jumps, while the single prime indicates a worldline that shares no jumps with its double primed neighboring worldline.   

There are many possible paths to choose from for removing double jumps in the $i$-th worldline.  Due to the form of the Hamitonian \eqref{eq:xxham}, changing the placement of a double jump along the imaginary-time direction has no effect on the energy of the effective classical system, except possibly for the part which arises from the diagonal part of the Hamiltonian.  Similar reasoning holds for the ordering of the single and double jumps; changing the ordering can only change the effective classical energy through the diagonal part of the Hamiltonian.  Similarly, changing the imaginary-time position of the double jumps in $\bar{x}_{i-1},\bar{x}_i$ will have no effect on the contributions to the effective classical energy that comes from double jumps occuring in $\bar{x}_{i-2},\bar{x}_{i-1}$.  Finally, note that changing the position of a double jump along the imaginary-time direction can only affect the classical energy through the diagonal part of the Hamiltonian, reducing the probability of the configuration by at most $e^{-4\beta}$ (the factor of 4 is due to the potential change from the diagonal part of the Hamiltonian when 3 worldlines are changed arbitrarily). 

From the above considerations it follows that the imaginary-time position of the double jumps in a particular worldline can be changed freely with the stationary weight of the resulting configuration decreased by at most a factor of $e^{-4\beta}$ (the factor of 4 bounds the potential change from the diagonal part of the Hamiltonian when 3 worldlines are changed arbitrarily).  Furthermore, since double jumps arising from the term $\sigma^{x}_i\sigma^x_{i+1}$  are simply bit-flips, two such jumps will cancel  out (``annihilate'') if they coincide.  This allows for paths that remove these double jumps in pairs simply by moving them around, with the additional guarantee that this will not lower the stationary weight by more than $e^{-4\beta}$.   Therefore the path from $[\bar{z}_1,\ldots,\bar{z}_n]$ to $[\bar{z}_1,\ldots,\bar{z}_{j-2},\bar{z}'_{j-1},\bar{z}''_j,\bar{z}'_{j+1},\bar{z}_{j+2},\ldots,\bar{z}_n]$ is defined to remove double jumps in pairs by shifting the position of the double jump at the least imaginary-time position along the imaginary-time direction until it encounters the next double jump and the pair is annihilated. 

The canonical path from $\mathbf{x}$ to $\mathbf{y}$ proceeds through the worldlines in order $1,\ldots,n$, prepping each pair of consecutive worldlines to remove double jumps, then updating the $j$-th worldline from $\bar{x}''_j$ to $\bar{y}''_j$.  The full update of the first worldline proceeds as,
\begin{align*}
[\bar{x}_1,\bar{x}_2,\ldots,\bar{x}_n] &\rightarrow  [\bar{x}''_1,\bar{x}'_2,\bar{x}_3,\ldots,\bar{x}_n] \rightarrow  [\bar{y}''_1,\bar{x}'_2,\bar{x}_3,\ldots,\bar{x}_n]\rightarrow [\bar{y}''_1,\bar{x}''_2,\bar{x}'_3,\bar{x}_4,\ldots,\bar{x}_n]\\&\rightarrow [\bar{y}''_1,\bar{y}''_2,\bar{x}'_3,\bar{x}_4,\ldots,\bar{x}_n] \rightarrow [\bar{y}_1,\bar{y}'_2,\bar{x}'_3,\bar{x}_4,\ldots,\bar{x}_n].
\end{align*}
The purpose of this series of steps (each of which consists of many single-site updates) is to avoid passing through intermediate configurations with too many additional double jumps occuring across the 1D domain wall formed near the worldlines which are being updated.  The worst case intermediate configurations along the paths will have the form,
\begin{equation}
\mathbf{z} = [\bar{y}_1,\ldots,\bar{y}_{j-2},\bar{y}'_{j-1}, (y''_{1,j},\ldots,y''_{r,j},x''_{r+1,j},\ldots,x''_{L,j}),\bar{x}'_{j+1}\bar{x}_{j+2},\ldots,\bar{x}_n].\label{eq:zee}
\end{equation}
This system of paths is highly degenerate (a particular configuration such as \eqref{eq:zee} will appear in many paths).   To bound the congestion we will use an encoding function $\eta_{(\mathbf{z},\mathbf{z}')} : \cP(\mathbf{z},\mathbf{z}')\rightarrow \Omega \times \Theta$.  If $\eta_{(\mathbf{z},\mathbf{z}')}(\gamma_{\mathbf{x}\mathbf{y}}) = (\mathbf{\eta},\theta)$, then
\begin{equation}
\mathbf{\eta}=  [\bar{x}_1,\ldots,\bar{x}_{j-2},\bar{x}'_{j-1}, (x''_{1,j},\ldots,x''_{r,j},y''_{r+1,j},\ldots,y''_{L,j}),\bar{y}'_{j+1},\bar{y}_{j+2},\ldots,\bar{y}_n].\label{eq:eta}
\end{equation}
and $\theta$ contains all of the information needed to reconstruct $\bar{x}_{j-1},\bar{x}_j,\bar{x}_{j+1},\bar{y}_{j-1},\bar{y}_j,\bar{y}_{j+1}$ from $\bar{x}'_{j-1},\bar{x}''_j$, $\bar{x}'_{j+1},\bar{y}'_{j-1},\bar{y}''_j,\bar{y}'_{j+1}$.  This lost information must specify the location of all of the jumps that have been removed from these 6 worldlines.  This means that the encoding sends exponentially paths to each state, and so to obtain a polynomial time mixing bound we have to slightly generalize \eq{inject} to account for the fact that the stationary probability on the endpoints of most of these paths is sufficiently small.  

For any $\mathbf{v} \in \Omega$ define $d_j(\mathbf{v}) :=d(\bar{v}_{j-1},\bar{v}_{j}) + d(\bar{v}_j,\bar{v}_{j+1})$ (which is just the total number of double jumps in the $j$-th worldline of $\mathbf{v}$).   Let $q$ be the number of additional double jumps in the $j$-th worldlines of $\mathbf{x}$ and $\mathbf{y}$ which are not present in the $j$-th worldlines of $\mathbf{z}$ and $\boldsymbol{\eta}$,
\begin{equation}
q =  d_j(\mathbf{x}) + d_j(\mathbf{y}) - d_j(\mathbf{z}) - d_j(\boldsymbol{\eta})\label{eq:que}
\end{equation}
Define $\mathcal{P}^q_{\mathbf{z},\mathbf{z}'}$ to be the subset of canonical paths passing through $(\mathbf{z},\mathbf{z}')$ with endpoints $\mathbf{x}$ and $\mathbf{y}$ that satisfy \eqref{eq:que}.   With these definitions the congestion \eqref{eq:congestionDef} becomes
\begin{equation}
R \leq \max_{(\mathbf{z},\mathbf{z}')} \frac{1}{\pi(\mathbf{z})P(\mathbf{z},\mathbf{z}')} \sum_{q=0}^{L} \sum_{\gamma_{\mathbf{x}\mathbf{y}} \in \mathcal{P}^q_{\mathbf{z},\mathbf{z}'}} \pi(\mathbf{x}) \pi(\mathbf{y}),\label{eq:rpartway}
\end{equation}
Let $\Theta = \cup_{q =1}^{6L} \Theta_q$ where $G(q) = |\Theta_q| = {{6L}\choose{q}}$ is the number of ways to restore $q$ jumps to the undetermined worldlines.  Now suppose we find $M(q)$ such that
\begin{equation}
\pi(\mathbf{x})\pi(\mathbf{y}) \leq M(q) \cdot \pi(\mathbf{z}) \pi(\boldsymbol{\eta}) \label{eq:Rintermediate},
\end{equation}
for all $\gamma_{\mathbf{x}\mathbf{y}} \in \mathcal{P}^q_{\mathbf{z},\mathbf{z}'}$, then this implies
\begin{align}
R & \leq P^{-1}_{\min}\sum_{q=0}^{L} M(q) \sum_{\gamma_{\mathbf{x}\mathbf{y}} \in \mathcal{P}^q_{\mathbf{z},\mathbf{z}'}} \pi(\boldsymbol{\eta})\\
&\leq P^{-1}_{\min}  \sum_{q=0}^{L} M(q) G(q) \sum_{\boldsymbol{\eta} \in \Omega}\pi(\boldsymbol{\eta})\\
& = P^{-1}_{\min}  \sum_{q=0}^{L} M(q) G(q) \label{eq:R3}
\end{align}
To satisfy \eqref{eq:Rintermediate} we need $M(q)$ to satisfy
\begin{align}
M(q) &\geq \frac{\pi(\mathbf{x})\pi(\mathbf{y})}{\pi(\mathbf{z})\pi({\boldsymbol{\eta}})} \label{eq:M1}\\
& =  \frac{\pi([\bar{x}_{j-2},\bar{x}_{j-1},\bar{x}_j,\bar{x}_{j+1},\bar{x}_{j+2}])\pi([\bar{y}_{j-2},\bar{y}_{j-1},\bar{y}_j,\bar{y}_{j+1},\bar{y}_{j+2}])}{\pi([\bar{y}_{j-2},\bar{y}'_{j-1},\bar{x}'_j + \bar{y}'_j,\bar{x}'_{j+1},\bar{x}_{j+2}])\pi([\bar{x}_{j-2},\bar{x}'_{j-1},\bar{x}'_j + \bar{y}'_j,\bar{y}'_{j+1},\bar{y}_{j+2}])}\label{eq:M2}\\
& \geq e^{4\beta}  \frac{\pi([\bar{x}_{j-1},\bar{x}_j,\bar{x}_{j+1}])\pi([\bar{y}_{j-1},\bar{y}_j,\bar{y}_{j+1}])}{\pi([\bar{y}'_j,\bar{x}'_j + \bar{y}'_j,\bar{x}'_{j+1}])\pi([\bar{x}'_{j-1},\bar{x}'_j + \bar{y}'_j,\bar{y}'_{j+1}])}\label{eq:M3}
\end{align}
where \eqref{eq:M2} to \eqref{eq:M3} follows because the the terms in the classical energy function which correspond to off-diagonal parts of the Hamiltonian involving qubits $j-2$ and $j+2$ are unaffected by the potentially changed spin values in the primed $j-1$ and $j+1$ worldlines.  The factor of $e^{4\beta}$ comes from upper bounding the change in the classical energy function due to the diagonal terms of the Hamiltonian between qubits $j-2,j-1$ and $j+1,j+2$.  The effect of the diagonal part of the Hamiltonian on the ratio expression remaining in \eqref{eq:M3} can be bounded by a factor of $e^{8 \beta}$.  The additional single jumps introduced near the domain wall can increase the ratio by at most $L^{12}$ (using \eqref{eq:transitionLB}, and noting that the worst case occurs when 2 new single jumps are introduced in each of 3 worldlines, in each of $\mathbf{z}$ and $\mathbf{\eta}$).   Each of the double jumps in the numerator which are not present in the denominator decreases the ratio by at least $(2\beta / L)$.  Therefore we may take $M(q) =  e^{12\beta}(2\beta)^q L^{-q}$ and \eqref{eq:R3} becomes
\be
R \leq e^{12\beta} L^{12} \cdot P^{-1}_{\min} \sum_{q=0}^{L} {{6 L}\choose{q}}  \left(\frac{2 \beta}{L}\right)^q \leq L^{12} \cdot e^{20\beta}\cdot P^{-1}_{\min}.
\ee
Using the fact that $P_{\min}^{-1}$ is $\poly(n,\beta,\Gamma^{-1})$ from \eqref{eq:transitionLB}, and the fact that the maximum length of the canonical paths used in this section is $\mathcal{O}(n^2 L^2)$, this shows that $\tau_{\textnormal{mix}}$ is $\poly(n,e^\beta,\Gamma^{-1})$ as required to complete the proof of theorem~\ref{theo:tim}.  
\subsection{Quasipolynomial time mixing in theorem 1}\label{sec:general1D}

In this section we also work with a restricted state space $\Omega^*$, which is defined as the set of configurations with fewer than $2 B := 4 \beta \log(n)$ jumps in each of the worldlines, 
\begin{equation}
\Omega^* = \{\mathbf{z} = [\bar{z}_1,\ldots,\bar{z}_n] \in \Omega : |\{i : z_{i,j} \neq z_{i+1,j}\}| \leq 2 B\;\textrm{for } j = 1,\ldots,n\}.
\end{equation}
In \secref{restricting} we show that $Z^* = \sum_{\mathbf{z}\in \Omega^*} e^{-\beta E_{\beta}(\mathbf{z})}$ satisfies $|Z - Z^*|\leq \delta Z$. The PIMC dynamics can be restricted to $\Omega^*$ by rejecting moves which would leave $\Omega^*$, or alternatively the canonical paths in this section bound the mixing time within $\Omega^*$ by the leaky random walk conductance bound in~\cite{crosson2016simulated}.  To show rapid mixing of PIMC within $\Omega^*$ we assign a canonical path $\gamma_{\mathbf{x y}}$ to every $\mathbf{x},\mathbf{y} \in \Omega^*$.  The simple case for these paths occurs when $\mathbf{x},\mathbf{y}$ are in a subset $\Omega^*_{\textrm{inner}}\subseteq \Omega^*$ which only allows half as many jumps per worldline as $\Omega^*$ does,
\begin{equation}
\Omega^*_{\textrm{inner}} = \{\mathbf{z} = [z_1,\ldots,z_n] \in \Omega : |\{i : z_{i,j} \neq z_{i+1,j}\}| \leq B \;\textrm{for } j = 1,\ldots,n\}.
\end{equation}  
If $\mathbf{x},\mathbf{y} \in \Omega^*_{\textrm{inner}}$ then we use the same paths which were used to show rapid mixing for TIM in Section~\ref{sec:transverseIsing}.  Various intermediate points $\mathbf{v}$ along this path $\gamma_{\mathbf{x},\mathbf{y}}$ may leave $\Omega^*_\textrm{inner}$ but will remain in $\Omega^*$ since
\begin{equation}
\mathbf{v} \;=(y_{1,1},\ldots,y_{L,1},\ldots,y_{L - r - 1,m} , x_{L - r, m},x_{L-r+1,m}\ldots,x_{1,n},\ldots,x_{L,n}),
\end{equation}
may have as many as $B$ jumps in the first $r$ time slices of $\bar{y}_m$ and $B$ jumps in the $L-r$ latter time slices of $\bar{x}_m$, so $\bar{v}_m$ will have at most $2B$ jumps in total and so $\mathbf{v}\in\Omega^*$.

For points $\mathbf{x},\mathbf{y} \in \Omega^*$ which are not in $\Omega^*_{\textrm{inner}}$, we must not produce intermediate points with too many jumps in any worldline.  To accomplish this the paths will have a ``clean up'' step similar to the one which was used in the previous section, to reduce the number of 2-local jumps in a worldline.   Consider an intermediate point of the path which has so far updated the worldlines of the first $m$ qubits (which could be $m = 0$), $\mathbf{v} = [\bar{y}_1,\ldots,\bar{y}_m,\bar{x}_{m+1},\ldots,\bar{x}_n]$.  If $\bar{x}_{m+1}$ has more than $B$ double jumps then we remove these in imaginary-time order.  Since we can no longer ``annihilate'' these transitions in pairs as was done in the previous section, we instead dismantle them into 1-local jumps (relying on the fictitious transverse field) and then annihilate those against one another.  The effect of this process on the stationary weight of the configuration will be analyzed after we have given the encoding function for these paths.  The net result of this process results is a worldline $\bar{x}'_{m+1}$ with fewer than $B$ jumps.  Similarly let $\bar{y}'_{m+1}$ be defined from $\bar{y}_{m+1}$ with all but the first $B$ jumps removed.   Next we update $\bar{x}'_{m+1}$ to $\bar{y}'_{m+1}$ in the usual way, and since $\bar{x}'_{m+1}$ and $\bar{y}'_{m+1}$ each have fewer than $B$ jumps, the intermediate configurations along this path will have less than $2B$ jumps in every worldline and therefore remain in $\Omega^*$.   

The encoding $\eta:\mathcal{P}_{(\mathbf{z},\mathbf{z}')} \rightarrow \Omega^*\times \Theta$ for these paths is again defined by \eqref{eq:eta}.  This time the maximum degeneracy of the encoding is $\binom{4 L}{B}$, since there are at most $4L$ positions to which at most $B$ 2-local jumps can be restored in the undetermined worldlines.  Unfortunately, this time we cannot have a inequality analogous to \eqref{eq:Rintermediate}, because the additional double jumps present in $\mathbf{x}$ and $\mathbf{y}$ can completely change the bit-values in wordlines $j-1$ and $j+1$.  Unlike the case in the previous section these altered spin values can reduce the classical energy drastically because of the $h$ terms acting on worldlines $j-2,j-1$ and $j+1,j+2$.   The stationary weight of $\mathbf{z}$ and $\boldsymbol{\eta}$ can be lowered from $\mathbf{x}$ and $\mathbf{y}$ by a factor which scales exponentially in the number of double jumps that $\mathbf{x}$ and $\mathbf{y}$ have in worldlines $j-2,j-1$ and $j+1,j+2$.   Therefore,
\begin{align}
M &\geq \frac{\pi(\mathbf{x})\pi(\mathbf{y})}{\pi(\mathbf{z})\pi({\boldsymbol{\eta}})} \label{eq:MMM1}\\
& =  \frac{\pi([\bar{x}_{i-2},\bar{x}_{i-1},\bar{x}_i,\bar{x}_{i+1},\bar{x}_{i+2}])\pi([\bar{y}_{i-2},\bar{y}_{i-1},\bar{y}_i,\bar{y}_{i+1},\bar{y}_{i+2}])}{\pi([\bar{y}_{i-2},\bar{y}'_{i-1},\bar{x}'_i + \bar{y}'_i,\bar{x}'_{i+1},\bar{x}_{i+2}])\pi([\bar{x}_{i-2},\bar{x}'_{i-1},\bar{x}'_i + \bar{y}'_i,\bar{y}'_{i+1},\bar{y}_{i+2}])}\label{eq:MMM2}\\
& \geq e^{4\beta} \cdot \left(\frac{\beta \Gamma}{L}\right)^{-4B}\frac{\pi([\bar{x}_{i-1},\bar{x}_i,\bar{x}_{i+1}])\pi([\bar{y}_{i-1},\bar{y}_i,\bar{y}_{i+1}])}{\pi([\bar{y}'_{i-1},\bar{x}'_i + \bar{y}'_i,\bar{x}'_{i+1}])\pi([\bar{x}'_{i-1},\bar{x}'_i + \bar{y}'_i,\bar{y}'_{i+1}])}\label{eq:MMM3}\\
& \geq e^{12\beta} \cdot \left(\frac{\beta \Gamma}{L}\right)^{-8B},
\end{align}
The fictitious transverse field has magnitude $\delta_\textrm{mult}/n$, so the congestion satisfies
\begin{align}
R &\leq  P^{-1}_{\min} M \cdot G\\
&\leq P_{\min}^{-1} \cdot e^{12\beta} \cdot \left(\textrm{poly}(n)\cdot \delta_\text{mult}^{-1}\right)^{8\beta\Gamma \log n} \cdot \binom{4 L}{4\beta \Gamma \log n},
\end{align}
from which it follows that the mixing time $\tau_{\textrm{mix}}$ is $\mathcal{O}(2^{\beta (\log n \delta_{\text{mult}}^{-1})^2})$ and we obtain the overall PIMC runtime as claimed in theorem~\ref{theo:tim}.

\section*{Acknowledgments}
Part of this work was completed while EC was funded by the Institute for Quantum Information and Matter, an NSF Physics Frontiers Center (NSF Grant PHY-1125565) with support of the Gordon and Betty Moore Foundation (GBMF-12500028). 
AWH was funded by NSF grants CCF-1452616, CCF-1629809, CCF-1729369 and ARO contract W911NF-17-1-0433.
The research is based upon work partially supported by the Office of
the Director of National Intelligence (ODNI), Intelligence Advanced
Research Projects Activity (IARPA), via the U.S. Army Research Office
contract W911NF-17-C-0050. The views and conclusions contained herein are
those of the authors and should not be interpreted as necessarily
representing the official policies or endorsements, either expressed or
implied, of the ODNI, IARPA, or the U.S. Government. The U.S. Government
is authorized to reproduce and distribute reprints for Governmental
purposes notwithstanding any copyright annotation thereon.


\begin{thebibliography}{46}
\providecommand{\natexlab}[1]{#1}
\providecommand{\url}[1]{\texttt{#1}}
\expandafter\ifx\csname urlstyle\endcsname\relax
  \providecommand{\doi}[1]{doi: #1}\else
  \providecommand{\doi}{doi: \begingroup \urlstyle{rm}\Url}\fi

\bibitem[Albash and Lidar(2018)]{albash2016adiabatic}
Tameem Albash and Daniel~A. Lidar.
\newblock Adiabatic quantum computation.
\newblock \emph{Rev. Mod. Phys.}, 90:\penalty0 015002, Jan 2018.
\newblock \doi{10.1103/RevModPhys.90.015002}.

\bibitem[Albash et~al.(2017)Albash, Wagenbreth, and Hen]{albash2017off}
Tameem Albash, Gene Wagenbreth, and Itay Hen.
\newblock Off-diagonal expansion quantum monte carlo.
\newblock \emph{Physical Review E}, 96\penalty0 (6):\penalty0 063309, 2017.
\newblock \doi{10.1103/PhysRevE.96.063309}.

\bibitem[Andriyash and Amin(2017)]{andriyash2017can}
Evgeny Andriyash and Mohammad~H Amin.
\newblock Can quantum {M}onte {C}arlo simulate quantum annealing?, 2017,
  \href{http://arxiv.org/abs/1703.09277}{{\ttfamily arXiv:1703.09277}}.

\bibitem[Arad et~al.(2017)Arad, Landau, Vazirani, and Vidick]{arad2016rigorous}
Itai Arad, Zeph Landau, Umesh Vazirani, and Thomas Vidick.
\newblock Rigorous {RG} algorithms and area laws for low energy eigenstates in
  {1D}.
\newblock \emph{Communications in Mathematical Physics}, 356\penalty0
  (1):\penalty0 65--105, 2017.
\newblock \doi{10.1007/s00220-017-2973-z}.

\bibitem[Bezáková et~al.(2008)Bezáková, Štefankovič, Vazirani, and
  Vigoda]{vigoda-2005}
Ivona Bezáková, Daniel Štefankovič, Vijay~V. Vazirani, and Eric Vigoda.
\newblock Accelerating simulated annealing for the permanent and combinatorial
  counting problems.
\newblock \emph{SIAM Journal on Computing}, 37\penalty0 (5):\penalty0
  1429--1454, 2008.
\newblock \doi{10.1137/050644033}.

\bibitem[Brady and van Dam(2016{\natexlab{a}})]{brady2016quantum}
Lucas~T Brady and Wim van Dam.
\newblock Quantum {M}onte {C}arlo simulations of tunneling in quantum adiabatic
  optimization.
\newblock \emph{Physical Review A}, 93\penalty0 (3):\penalty0 032304,
  2016{\natexlab{a}},  \href{http://arxiv.org/abs/1509.02562}{{\ttfamily
  arXiv:1509.02562}}.
\newblock \doi{10.1103/PhysRevA.93.032304}.

\bibitem[Brady and van Dam(2016{\natexlab{b}})]{brady2016spectral}
Lucas~T Brady and Wim van Dam.
\newblock Spectral-gap analysis for efficient tunneling in quantum adiabatic
  optimization.
\newblock \emph{Physical Review A}, 94\penalty0 (3):\penalty0 032309,
  2016{\natexlab{b}},  \href{http://arxiv.org/abs/1601.01720}{{\ttfamily
  arXiv:1601.01720}}.
\newblock \doi{10.1103/PhysRevA.94.032309}.

\bibitem[Bravyi(2015)]{bravyi-2014}
Sergey Bravyi.
\newblock {M}onte {C}arlo simulation of stoquastic {H}amiltonians.
\newblock \emph{Quantum Information \& Computation}, 15\penalty0
  (13-14):\penalty0 1122--1140, 2015,
  \href{http://arxiv.org/abs/1402.2295}{{\ttfamily arXiv:1402.2295}}.
\newblock \doi{10.5555/2871363.2871366}.

\bibitem[Bravyi and Gosset(2017)]{bravyi2016polynomial}
Sergey Bravyi and David Gosset.
\newblock Polynomial-time classical simulation of quantum ferromagnets.
\newblock \emph{Physical review letters}, 119\penalty0 (10):\penalty0 100503,
  2017,  \href{http://arxiv.org/abs/1612.05602}{{\ttfamily arXiv:1612.05602}}.
\newblock \doi{10.1103/PhysRevLett.119.100503}.

\bibitem[Bravyi and Hastings(2017)]{bravyi-2014b}
Sergey Bravyi and Matthew Hastings.
\newblock On complexity of the quantum {I}sing model.
\newblock \emph{Communications in Mathematical Physics}, 349\penalty0
  (1):\penalty0 1--45, 2017,  \href{http://arxiv.org/abs/1410.0703}{{\ttfamily
  arXiv:1410.0703}}.
\newblock \doi{10.1007/s00220-016-2787-4}.

\bibitem[Bravyi and Terhal(2009)]{bravyi-2008}
Sergey Bravyi and Barbara~M. Terhal.
\newblock Complexity of stoquastic frustration-free {H}amiltonians.
\newblock \emph{SIAM J. Comput.}, 39\penalty0 (4):\penalty0 1462--1485, 2009,
  \href{http://arxiv.org/abs/0806.1746}{{\ttfamily arXiv:0806.1746}}.
\newblock \doi{10.1137/08072689X}.

\bibitem[Bravyi et~al.(2006)Bravyi, DiVincenzo, Oliveira, and
  Terhal]{bravyi-2006}
Sergey Bravyi, David~P. DiVincenzo, Roberto~I. Oliveira, and Barbara~M. Terhal.
\newblock The complexity of stoquastic local {H}amiltonian problems.
\newblock \emph{Quant. Inf. Comp.}, 8\penalty0 (5):\penalty0 0361--0385, 2006,
  \href{http://arxiv.org/abs/quant-ph/0606140}{{\ttfamily
  arXiv:quant-ph/0606140}}.
\newblock \doi{10.5555/2011772.2011773}.

\bibitem[Bringewatt et~al.(2018)Bringewatt, Dorland, Jordan, and
  Mink]{bringewatt2018diffusion}
Jacob Bringewatt, William Dorland, Stephen~P. Jordan, and Alan Mink.
\newblock Diffusion monte carlo approach versus adiabatic computation for local
  hamiltonians.
\newblock \emph{Phys. Rev. A}, 97:\penalty0 022323, Feb 2018.
\newblock \doi{10.1103/PhysRevA.97.022323}.

\bibitem[Cipriani and Pra(2010)]{Cipriani-2010}
Alessandra Cipriani and Paolo~Dai Pra.
\newblock Decay of correlations for quantum spin systems with a transverse
  field: A dynamic approach, 2010,
  \href{http://arxiv.org/abs/1005.3547}{{\ttfamily arXiv:1005.3547}}.

\bibitem[Crosson and Harrow(2016)]{crosson2016simulated}
Elizabeth Crosson and Aram~W Harrow.
\newblock Simulated quantum annealing can be exponentially faster than
  classical simulated annealing.
\newblock In \emph{Foundations of Computer Science (FOCS), 2016 IEEE 57th
  Annual Symposium on}, pages 714--723. IEEE, 2016,
  \href{http://arxiv.org/abs/1601.03030}{{\ttfamily arXiv:1601.03030}}.
\newblock \doi{10.1109/FOCS.2016.81}.

\bibitem[Crosson et~al.(2014)Crosson, Farhi, Lin, Lin, and Shor]{CrossonFLLS14}
Elizabeth Crosson, Edward Farhi, Cedric Yen-Yu Lin, Han-Hsuan Lin, and Peter
  Shor.
\newblock Different strategies for optimization using the quantum adiabatic
  algorithm, 2014,  \href{http://arxiv.org/abs/1401.7320}{{\ttfamily
  arXiv:1401.7320}}.

\bibitem[Cubitt et~al.(2018)Cubitt, Montanaro, and
  Piddock]{cubitt2017universal}
Toby~S. Cubitt, Ashley Montanaro, and Stephen Piddock.
\newblock Universal quantum hamiltonians.
\newblock \emph{Proceedings of the National Academy of Sciences}, 115\penalty0
  (38):\penalty0 9497--9502, 2018,
  \href{http://arxiv.org/abs/1701.05182}{{\ttfamily arXiv:1701.05182}}.
\newblock ISSN 0027-8424.
\newblock \doi{10.1073/pnas.1804949115}.
\newblock URL \url{https://www.pnas.org/content/115/38/9497}.

\bibitem[Fan(1951)]{Fan51}
Ky~Fan.
\newblock Maximum properties and inequalities for the eigenvalues of completely
  continuous operators.
\newblock \emph{Proceedings of the National Academy of Sciences}, 37\penalty0
  (11):\penalty0 760--766, 1951.
\newblock ISSN 0027-8424.
\newblock \doi{10.1073/pnas.37.11.760}.

\bibitem[Farhi et~al.(2001)Farhi, Goldstone, Gutmann, Lapan, Lundgren, and
  Preda]{farhi-2000}
Edward Farhi, Jeffrey Goldstone, Sam Gutmann, Joshua Lapan, Andrew Lundgren,
  and Daniel Preda.
\newblock A quantum adiabatic evolution algorithm applied to random instances
  of an np-complete problem.
\newblock \emph{Science}, 292\penalty0 (5516):\penalty0 472--475, 2001.
\newblock \doi{10.1126/science.1057726}.

\bibitem[Farhi et~al.(2002)Farhi, Goldstone, and Gutmann]{farhi-2002}
Edward Farhi, Jeffrey Goldstone, and Sam Gutmann.
\newblock Quantum adiabatic evolution algorithms versus simulated annealing,
  2002,  \href{http://arxiv.org/abs/quant-ph/0201031}{{\ttfamily
  arXiv:quant-ph/0201031}}.

\bibitem[Foulkes et~al.(2001)Foulkes, Mitas, Needs, and
  Rajagopal]{foulkes2001quantum}
WMC Foulkes, L~Mitas, RJ~Needs, and G~Rajagopal.
\newblock Quantum monte carlo simulations of solids.
\newblock \emph{Reviews of Modern Physics}, 73\penalty0 (1):\penalty0 33, 2001.
\newblock \doi{10.1103/RevModPhys.73.33}.

\bibitem[Hastings(2009)]{hastings-2009}
M.~B. Hastings.
\newblock Quantum adiabatic computation with a constant gap is not useful in
  one dimension.
\newblock \emph{Phys. Rev. Lett.}, 103:\penalty0 050502, 2009,
  \href{http://arxiv.org/abs/0902.2960}{{\ttfamily arXiv:0902.2960}}.
\newblock \doi{10.1103/PhysRevLett.103.050502}.

\bibitem[Hastings and Freedman(2013)]{hastings-2013}
Matthew~B Hastings and MH~Freedman.
\newblock Obstructions to classically simulating the quantum adiabatic
  algorithm.
\newblock \emph{Quantum Information \& Computation}, 13\penalty0
  (11-12):\penalty0 1038--1076, 2013,
  \href{http://arxiv.org/abs/1302.5733}{{\ttfamily arXiv:1302.5733}}.
\newblock \doi{10.5555/2535639.2535647}.

\bibitem[Hoeffding(1963)]{Hoeffding-1963}
Wassily Hoeffding.
\newblock Probability inequalities for sums of bounded random variables.
\newblock \emph{Journal of the American Statistical Association}, 58\penalty0
  (301):\penalty0 13--30, 1963.
\newblock \doi{10.1080/01621459.1963.10500830}.

\bibitem[Hormozi et~al.(2017)Hormozi, Brown, Carleo, and
  Troyer]{hormozi2017nonstoquastic}
Layla Hormozi, Ethan~W. Brown, Giuseppe Carleo, and Matthias Troyer.
\newblock Nonstoquastic hamiltonians and quantum annealing of an ising spin
  glass.
\newblock \emph{Phys. Rev. B}, 95:\penalty0 184416, May 2017.
\newblock \doi{10.1103/PhysRevB.95.184416}.

\bibitem[Jarret and Lackey(2017)]{jarret2017substochastic}
Michael Jarret and Brad Lackey.
\newblock Substochastic {M}onte {C}arlo algorithms, 2017,
  \href{http://arxiv.org/abs/1704.09014}{{\ttfamily arXiv:1704.09014}}.

\bibitem[Jarret et~al.(2016)Jarret, Jordan, and Lackey]{jarret2016adiabatic}
Michael Jarret, Stephen~P Jordan, and Brad Lackey.
\newblock Adiabatic optimization versus diffusion monte carlo methods.
\newblock \emph{Physical Review A}, 94\penalty0 (4), 2016,
  \href{http://arxiv.org/abs/1607.03389}{{\ttfamily arXiv:1607.03389}}.
\newblock \doi{10.1103/PhysRevA.94.042318}.

\bibitem[Jerrum and Sinclair(1993)]{jerrum-1993}
Mark Jerrum and Alistair Sinclair.
\newblock Polynomial-time approximation algorithms for the {{I}sing} model.
\newblock \emph{SIAM Journal on Computing}, 22:\penalty0 1087--1116, 1993.
\newblock \doi{10.1137/0222066}.

\bibitem[Jerrum and Sinclair(1997)]{jerrum-1996}
Mark Jerrum and Alistair Sinclair.
\newblock The {M}arkov chain {M}onte {C}arlo method: An approach to approximate
  counting and integration.
\newblock In Dorit~S. Hochbaum, editor, \emph{Approximation Algorithms for
  NP-hard Problems}, pages 482--520. PWS Publishing Co., Boston, MA, USA, 1997.
\newblock ISBN 0-534-94968-1.
\newblock \doi{10.5555/241938.241950}.

\bibitem[Jiang et~al.(2017)Jiang, Smelyanskiy, Isakov, Boixo, Mazzola, Troyer,
  and Neven]{jiang2017scaling}
Zhang Jiang, Vadim~N Smelyanskiy, Sergei~V Isakov, Sergio Boixo, Guglielmo
  Mazzola, Matthias Troyer, and Hartmut Neven.
\newblock Scaling analysis and instantons for thermally assisted tunneling and
  quantum {M}onte {C}arlo simulations.
\newblock \emph{Physical Review A}, 95\penalty0 (1):\penalty0 012322, 2017,
  \href{http://arxiv.org/abs/1603.01293}{{\ttfamily arXiv:1603.01293}}.
\newblock \doi{10.1103/PhysRevA.95.012322}.

\bibitem[Kadowaki and Nishimori(1998)]{kadowaki1998quantum}
Tadashi Kadowaki and Hidetoshi Nishimori.
\newblock Quantum annealing in the transverse {I}sing model.
\newblock \emph{Phys. Rev. E}, 58:\penalty0 5355--5363, Nov 1998,
  \href{http://arxiv.org/abs/cond-mat/9804280}{{\ttfamily
  arXiv:cond-mat/9804280}}.
\newblock \doi{10.1103/PhysRevE.58.5355}.

\bibitem[Kong and Crosson(2015)]{kong2015performance}
Linghang Kong and Elizabeth Crosson.
\newblock The performance of the quantum adiabatic algorithm on spike
  {H}amiltonians.
\newblock \emph{Int. J. Quantum Inform.}, 15\penalty0 (1750011), 2015,
  \href{http://arxiv.org/abs/1511.06991}{{\ttfamily arXiv:1511.06991}}.
\newblock \doi{10.1142/S0219749917500113}.

\bibitem[Landau et~al.(2013)Landau, Vazirani, and Vidick]{landau2013polynomial}
Zeph Landau, Umesh Vazirani, and Thomas Vidick.
\newblock A polynomial-time algorithm for the ground state of 1{D} gapped local
  {{H}amiltonians}.
\newblock \emph{Nature Physics}, 2013,
  \href{http://arxiv.org/abs/1307.5143}{{\ttfamily arXiv:1307.5143}}.
\newblock \doi{10.1038/nphys3345}.

\bibitem[Levin et~al.(2008)Levin, Peres, and Wilmer]{peres-2008}
D.A. Levin, Y.~Peres, and E.L. Wilmer.
\newblock \emph{{M}arkov Chains and Mixing Times}.
\newblock American Mathematical Soc., 2008.
\newblock ISBN 9780821886274.
\newblock \doi{10.1090/mbk/107}.

\bibitem[Martinelli and Wouts(2012)]{Martinelli-2011}
Fabio Martinelli and Marc Wouts.
\newblock {G}lauber dynamics for the quantum {I}sing model in a transverse
  field on a regular tree.
\newblock \emph{Journal of Statistical Physics}, 146\penalty0 (5):\penalty0
  1059--1088, 2012,  \href{http://arxiv.org/abs/1105.5970}{{\ttfamily
  arXiv:1105.5970}}.
\newblock \doi{10.1007/s10955-012-0436-7}.

\bibitem[Marton\'ak et~al.(2002)Marton\'ak, Santoro, and
  Tosatti]{Martonak-2002}
Roman Marton\'ak, Giuseppe~E. Santoro, and Erio Tosatti.
\newblock Quantum annealing by the path-integral monte carlo method: The
  two-dimensional random ising model.
\newblock \emph{Phys. Rev. B}, 66:\penalty0 094203, Sep 2002.
\newblock \doi{10.1103/PhysRevB.66.094203}.

\bibitem[Muthukrishnan et~al.(2016)Muthukrishnan, Albash, and
  Lidar]{Lidar-2015b}
Siddharth Muthukrishnan, Tameem Albash, and Daniel~A Lidar.
\newblock Tunneling and speedup in quantum optimization for
  permutation-symmetric problems.
\newblock \emph{Physical Review X}, 6\penalty0 (3):\penalty0 031010, 2016.
\newblock \doi{10.1103/PhysRevX.6.031010}.

\bibitem[Reichardt(2004)]{Reichardt-2004}
Ben~W Reichardt.
\newblock The quantum adiabatic optimization algorithm and local minima.
\newblock In \emph{Proceedings of the thirty-sixth annual ACM symposium on
  Theory of computing}, pages 502--510. ACM, 2004.
\newblock \doi{10.1145/1007352.1007428}.

\bibitem[Sandvik(1999)]{sandvik1999stochastic}
Anders~W Sandvik.
\newblock Stochastic series expansion method with operator-loop update.
\newblock \emph{Physical Review B}, 59\penalty0 (22):\penalty0 R14157, 1999.
\newblock \doi{10.1103/PhysRevB.59.R14157}.

\bibitem[Stella and Santoro(2007)]{stella2007quantum}
Lorenzo Stella and Giuseppe~E Santoro.
\newblock Quantum annealing of an {I}sing spin-glass by {G}reen’s function
  {M}onte {C}arlo.
\newblock \emph{Physical Review E}, 75\penalty0 (3):\penalty0 036703, 2007,
  \href{http://arxiv.org/abs/cond-mat/0608420}{{\ttfamily
  arXiv:cond-mat/0608420}}.
\newblock \doi{10.1103/PhysRevE.75.036703}.

\bibitem[Susa et~al.(2018)Susa, Yamashiro, Yamamoto, and
  Nishimori]{susa2018exponential}
Yuki Susa, Yu~Yamashiro, Masayuki Yamamoto, and Hidetoshi Nishimori.
\newblock Exponential speedup of quantum annealing by inhomogeneous driving of
  the transverse field.
\newblock \emph{Journal of the Physical Society of Japan}, 87\penalty0
  (2):\penalty0 023002, 2018.
\newblock \doi{10.7566/JPSJ.87.023002}.

\bibitem[Suzuki(1986)]{suzuki-1986}
Masuo Suzuki.
\newblock Quantum statistical {M}onte {C}arlo methods and applications to spin
  systems.
\newblock \emph{Journal of Statistical Physics}, 43\penalty0 (5-6):\penalty0
  883--909, 1986.
\newblock ISSN 0022-4715.
\newblock \doi{10.1007/BF02628318}.

\bibitem[Suzuki et~al.(1977)Suzuki, Miyashita, and Kuroda]{suzuki-1977}
Masuo Suzuki, Seiji Miyashita, and Akira Kuroda.
\newblock {M}onte {C}arlo simulation of quantum spin systems.
\newblock \emph{Prog. Theor. Phys.}, 58\penalty0 (5):\penalty0 1377--1387,
  1977.
\newblock \doi{10.1143/PTP.58.1377}.

\bibitem[Suzuki et~al.(2013)Suzuki, Inoue, and Chakrabarti]{suzuki-2013}
S.~Suzuki, J.~Inoue, and B.K. Chakrabarti.
\newblock \emph{Quantum {{I}sing} Phases and Transitions in Transverse
  {{I}sing} Models}.
\newblock Lecture notes in physics. Springer, 2013.
\newblock ISBN 9783642330391.
\newblock \doi{10.1007/978-3-642-33039-1}.

\bibitem[Verstraete et~al.(2008)Verstraete, Murg, and
  Cirac]{verstraete2008matrix}
Frank Verstraete, Valentin Murg, and J~Ignacio Cirac.
\newblock Matrix product states, projected entangled pair states, and
  variational renormalization group methods for quantum spin systems.
\newblock \emph{Advances in Physics}, 57\penalty0 (2):\penalty0 143--224, 2008,
   \href{http://arxiv.org/abs/0907.2796}{{\ttfamily arXiv:0907.2796}}.
\newblock \doi{10.1080/14789940801912366}.

\bibitem[Vinci and Lidar(2017)]{vinci2017non}
Walter Vinci and Daniel~A Lidar.
\newblock Non-stoquastic hamiltonians in quantum annealing via geometric
  phases.
\newblock \emph{npj Quantum Information}, 3\penalty0 (1):\penalty0 1--6, 2017.
\newblock \doi{10.1038/s41534-017-0037-z}.

\end{thebibliography}
\end{document}